\documentclass[11pt]{article}

\usepackage{epsfig,amsmath,latexsym,amssymb}
\usepackage{graphicx}
\usepackage{lscape}
\usepackage{picture, eso-pic, tikz} 
\usepackage{dsfont}
\usepackage{listings}
\usepackage{changes}
\usepackage{bbding}
\usepackage{url}
\usepackage{hyperref}
\usepackage{tablefootnote}

\usepackage{tikz}
\usetikzlibrary{arrows,shapes,er,positioning,arrows.meta,calc}
\tikzset{
     arrow/.style = { thick,  ->, >=Triangle},
}

\renewcommand{\baselinestretch}{1.1}
\def\E{{\mathbb E}}  %
\def\Beweis{\footnotesize}

\newcommand{\Remm}[1]{}
\newtheorem{theo}{Theorem}[section]
\newtheorem{lemma}[theo]{Lemma}

\newtheorem{cor}[theo]{Corollary}
\newtheorem{defi}[theo]{Definition}
\newtheorem{model ass}[theo]{Model Assumptions}
\newtheorem{estimator}[theo]{Estimator}

\newtheorem{rems}[theo]{Remarks}

\def\EndProof{{\begin{flushright}\vspace{-2mm}$\Box$\end{flushright}}}

\numberwithin{equation}{section}

\definecolor{MyGray}{rgb}{0.92,0.92,0.92}


\definecolor{British racing}{rgb}{0.0, 0.5, 0.0}

\begin{document}
\author{Ronald Richman\footnote{insureAI, ronaldrichman@gmail.com}
  \and
  Mario V.~W\"uthrich\footnote{Department of Mathematics, ETH Zurich,
mario.wuethrich@math.ethz.ch}}

\date{Version of \today}
\title{A Note on the Generalized Cape Cod Reserving Method}
\maketitle

\begin{abstract}
\noindent Claims reserving is one of the most important actuarial tasks in non-life insurance modeling. There are several popular methods to perform claims reserving such as the chain-ladder (CL), the Bornhuetter--Ferguson (BF) or the generalized Cape Cod (GCC) methods. These methods have originally been introduced as deterministic algorithms, and only in a later step, they have been lifted to stochastic models allowing for analyzing claims prediction uncertainty. This holds true for the CL and the BF methods, but not for the GCC method. The purpose of this article is to close this gap and derive an analytical formula for the mean squared error of prediction (MSEP) of the GCC method.

\medskip

\noindent
{\bf Keywords.} Claims reserving, chain-ladder, Bornhuetter--Ferguson, Cape Cod, generalized Cape Cod, prediction uncertainty, mean squared error of prediction, MSEP. 

\end{abstract}

\section{Introduction}

Claims reserving is one of the core forecasting problems in non-life insurance operations. Often, the claims reserves constitute the largest item on the liability side of the balance sheet of a non-life insurance company. Therefore, the reserving problem is not only about a good point prediction of ultimate losses, but also about understanding the underlying uncertainties in this point prediction. Moreover, the results produced during the reserving exercise are used to understand the most recent loss experience as it emerges. This provides important information to guide management of insurers.
In the insurance industry, the chain-ladder (CL) method \cite{Mack}, the Bornhuetter--Ferguson (BF) method \cite{BF}, and the generalized Cape Cod method \cite{Bu_unpublished, Gluck} remain the most popular techniques to set the claims reserves.

Over the past decades, the stochastic claims reserving literature has given different probabilistic interpretations of these classical claims reserving algorithms. For the CL method, Mack’s \cite{Mack} distribution-free CL model has been introduced for estimating the prediction uncertainty in the CL reserve estimates. Renshaw--Verrall \cite{RV} then showed that the CL algorithm can also be embedded into a generalized linear model (GLM) using quasi-likelihood estimation, while England--Verrall \cite{EV} broadened the picture by surveying various stochastic reserving models that reproduce classical estimates and by developing bootstrap and Bayesian approaches to full predictive distributions. This stream of literature shows that a deterministic reserving algorithm can have many different underlying stochastic models producing the same claims reserves, but providing different estimations of the prediction uncertainty in these claims reserves.

A parallel line of work has developed BF-type methods and their credibility interpretations. The original BF idea \cite{BF} combines observed loss development with a prior estimate of ultimate losses, making it especially attractive when observed claim development is immature or unstable for a pure observation-based extrapolation. This perspective was strengthened by Bayesian and empirical Bayes reformulations: Verrall \cite{Verrall} connected CL and BF-style reserving to Bayesian and generalized linear modeling, Mack \cite{Mack2000} showed that the Benktander \cite{Benktander} and Hovinen \cite{Hovinen} method can be interpreted as an intuitive credibility blend of the CL and BF methods, and Schmidt--Zocher \cite{SchmidtZocher} proposed the BF principle as a unifying framework for a family of reserve predictors rather than a single algorithm. The BF method received dedicated stochastic treatments from Mack \cite{Mack2008}, Alai et al.~\cite{AMW1, AMW2}, Saluz et al.~\cite{Saluz2011}, and a Bayesian formulation by England et al.~\cite{EVW}. These latter frameworks allow for the estimation of prediction uncertainty under the different stochastic formulations of the BF method.
An important (and critical) step to derive a stochastic model for the BF is to model the selection of the expected loss ratio (prior estimate) in an appropriate manner. This is challenging since this prior estimate should be set ``independently" of the observed data.

Against this background, the treatment of the uncertainty evaluation within the Cape Cod (CC) method of B\"uhlmann \cite{Bu_unpublished} and the generalized Cape Cod (GCC) method of Gluck \cite{Gluck} remains sparse. The CC method was introduced in 1983 in a summer school taught by B\"uhlmann \cite{Bu_unpublished}, and this method has not been published in any peer-reviewed actuarial journal. It presents an interesting middle ground between prior-based reserving (BF) and fully observation-based reserving (CL), by integrating an {\it estimated} loss ratio that is calculated from prior information, typically earned premium, and claims experience. This is precisely the feature that makes the CC method attractive in practice when one wants more stability than the CL method can provide for immature years, but it has also the feature that it is more responsive to observations than the BF method, because it does not fully rely on a prior expected loss ratio which may not be fully reliable. Gluck’s GCC \cite{Gluck} proposal allows for an additional trending by considering a chronologically ordered weighted average for the estimation of the loss ratio. However, the literature on the CC and the GCC methods remained largely algorithmic and our goal is to close this gap.

Precisely because the CC and the GCC methods utilize the observed information to estimate the loss ratio and does not require an ``independent'' loss ratio estimate, it should, in theory, be a promising candidate for stochastic modeling of reserves beyond the CL method. For analyzing prediction uncertainty within the CC method, Saluz \cite{Saluz} made an important first step towards stochastic modeling by introducing a distribution-free stochastic model that allowed her to estimate the prediction uncertainty within the CC method. However, her estimate does not fully match the application of the CC method in practice, because she used parameter estimates for CC reserving that are different from the ones typically used in industry -- the industry standard uses CL-implied parameters, whereas Saluz \cite{Saluz} used likelihood-implied parameters. This paper provides a different approach to the CC uncertainty estimation that is fully in line with the industry standard. We derive these results within the broader framework of GCC reserving, which encompasses the CL and the CC methods as the two boundary cases. Consequently, the results for these two limiting cases follow directly, as a natural byproduct of our derivations. The main tool to derive the prediction uncertainty estimates within the GCC framework is the error propagation method of R\"ohr \cite{Rohr}. R\"ohr \cite{Rohr} demonstrated that using error propagation considerations, one can derive Mack's \cite{Mack} uncertainty estimate along a different route. This paper demonstrates how the error propagation methodology can be used within the GCC framework for uncertainty estimation. One resulting boundary case exactly provides us with Mack's \cite{Mack} and R\"ohr's \cite{Rohr} estimates, and the other boundary case gives us a new uncertainty estimate for the CC method of B\"uhlmann \cite{Bu_unpublished}.

The practical relevance of this work is to provide a mean squared error of prediction (MSEP) estimator that mirrors the predictor that practitioners actually book. Many industry reserving processes use the BF method, but none of the proposed BF MSEP estimators has achieved widespread practical use. A plausible explanation is that the existing BF MSEP estimators do not closely mirror the way the BF method is applied in practice -- most importantly, they often require an explicit probabilistic specification of the prior expected loss ratio, whereas in industry the prior is typically obtained from a planning or pricing process whose stochastic properties are not formalized. The GCC method provides a good algorithmic approximation to the BF method, with the prior loss ratio replaced by a chronologically-weighted average of CL-implied loss ratios. The framework presented here delivers a closed-form MSEP estimator for the GCC method, fully consistent with the CL-implied parameter estimates that are the industry standard. In this sense, the present work supplies an analytical uncertainty estimator for what is, in effect, a good algorithmic proxy for the BF method, that is in  itself in widespread practical use.

In a similar manner, a second point of practical relevance concerns the Mack bootstrap of England--Verrall \cite{EnglandVerrall2006}, which is widely used in industry to estimate predictive distributions for reserves. The Mack bootstrap reproduces the CL reserves in the mean, which is typically different from the reserves actually booked when the booked reserves are obtained by a CC, GCC, or BF procedure. Often, some scaling is done to try to close the gap between the mean reserves produced by the Mack bootstrap procedure and what has actually been booked. The MSEP estimator presented here closes this gap with a closed-form solution: the same CL-implied loss ratios that drive the booked GCC reserve drive also the corresponding RMSEP figure.

\medskip

{\bf Organization.}
The next section introduces the CC predictor, it discusses parameter estimation in the CC method, and it connects the CC to the CL method. In the next step, we introduce the GCC method and we give several credibility interpretations of the GCC method. Section \ref{sec: Prediction uncertainty} derives the prediction uncertainty formulas measured by the MSEP. We start with Mack's \cite{Mack} CL MSEP formula and we explain how it can be derived using the error propagation technique. This sets the ground to derive an MSEP formula for the GCC using error propagation in Section \ref{sec: Error propagation in the Cape Cod method}.
Section \ref{sec: Example} gives a numerical example which is compared to the CC results of Saluz \cite{Saluz}. Finally, in Section \ref{sec: Conclusions} we conclude.

\section{Cape Cod methods}
\label{sec: Cape Cod methods}
\subsection{B\"uhlmann's Cape Cod method}
We start by discussing the CC reserving method introduced by B\"uhlmann \cite{Bu_unpublished}. The following model assumptions give a very general set-up for the CC reserving method, and we are going to consider more confined assumptions later on.

Let $C_{i,j}$ denote the cumulative claims in accident year $i \in \{1, \ldots, I\}$ after development period $j \in \{0,\ldots, J\}$. The meaning of cumulative claims if fairly general, e.g., $C_{i,j}$ can reflect cumulative payments, claims incurred or total number of reported claims. The total claim of accident year $i \in \{1, \ldots, I\}$ is given by the ultimate claim $C_{i,J}$. This ultimate claim is the main object of interest that we aim to forecast.

\begin{model ass}~\label{Model-Assumptions-Cape-Cod}
~
\begin{itemize}
\item Cumulative claims $(C_{i,j})_{0\le j \le J}$ of different accident
years $1\le i \le I$ are independent.
\item There are positive premiums $\pi_i>0$ for all accident years $1\le i \le I$, a claims ratio $\kappa>0$ and a positive claims development pattern $(\beta_j)_{0\leq j\leq J}$ with normalization
$\beta_J=1$ such that
\begin{equation*}
\E [C_{i,j}]= \beta_j\,\kappa \, \pi_i, 
\end{equation*}
for all $1 \le i \le I$ and $0\le j \le J$.
\end{itemize}
\end{model ass}

For simplicity, we assume $I=J+1$, thus, we work on a claims development triangle; for $I>J+1$ we would have a trapezoid, which makes the indexing unnecessarily tedious.

\begin{rems}\normalfont
\begin{itemize}
\item We call $\pi_i>0$ the premium of accident year $i$. This terminology is useful for getting an intuitive interpretation of the claims ratio $\kappa$. More generally, $(\pi_i)_{i=1}^I$ are just exposures for the corresponding accident years. Practically, usually the earned premium of calendar year $i$ is used for the exposure $\pi_i$.
\item We can view Model Assumptions \ref{Model-Assumptions-Cape-Cod} as a minimal set of assumptions. B\"uhlmann \cite{Bu_unpublished} defines an explicit frequency-severity model which complies with the above model assumptions. For explicit mathematical results, one often decomposes the cumulative claims into increments $Y_{i,0}, \ldots, Y_{i,J}$ such that $C_{i,j}=\sum_{l=0}^j X_{i,l}$. These increments have means 
\begin{equation}\label{cross-classified model}
\E [Y_{i,j}]= \gamma_j\,\kappa \, \pi_i, 
\end{equation}
with the incremental claims pattern $\gamma_j=\beta_j - \beta_{j-1}$, for $1\le j \le J$, and $\gamma_0=\beta_0$. 
\item Under an additional independence assumption on the increments $Y_{i,0}, \ldots, Y_{i,J}$, \eqref{cross-classified model} defines a classical cross-classified claims reserving model, e.g., the cross-classified over-dispersed Poisson (ODP) model of England--Verrall \cite{EV} satisfies \eqref{cross-classified model}. Moreover, the CC model in Saluz \cite[Model Assumptions 2.1]{Saluz} additionally makes a specific variance assumption. For the moment, we want to consider our more general Model Assumptions \ref{Model-Assumptions-Cape-Cod}, in particular, we do not want to assume that incremental claims are independent. 
\end{itemize}
\end{rems}

At time $I$, we have observed the cumulative claims $C_{i,I-i}$, $1 \le i \le I$. This gives us an unbiased estimator for the claims ratio $\kappa$
\begin{equation}\label{individual claims ratios}
\widehat{\kappa}_i=\frac{C_{i,I-i}}{\beta_{I-i}\, \pi_i}
\qquad \text{ for $1 \le i \le I$.}
\end{equation}
Aggregating this over all observed accident years gives us the
CC estimator for the claims ratio $\kappa$
\begin{equation}\label{CC kappa}
\widehat{\kappa}^{\rm CC}
=\frac{\sum_{i=1}^I C_{i,I-i}}
{\sum_{i=1}^I \beta_{I-i}\, \pi_i}
=\sum_{i=1}^I \frac{\beta_{I-i}\, \pi_i}{\sum_{k=1}^I
\beta_{I-k}\, \pi_k}\, \widehat{\kappa}_i.
\end{equation}
\begin{lemma} 
The CC estimators $\widehat{\kappa}_i$ and $\widehat{\kappa}^{\rm CC}$ are unbiased for $\kappa$.
\end{lemma}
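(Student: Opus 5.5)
The proof will use only linearity of expectation together with the mean structure in Model Assumptions \ref{Model-Assumptions-Cape-Cod}. Since the premiums $\pi_i$ and the development pattern $(\beta_j)_j$ are deterministic and positive, both estimators are deterministic linear combinations of the observed diagonal $C_{i,I-i}$. Unbiasedness then reduces to computing first moments. In particular, the independence assumption across accident years is not needed here.

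\emph{Individual claims ratios.} For fixed $1\le i\le I$, I will apply $\E[C_{i,I-i}]=\beta_{I-i}\,\kappa\,\pi_i$ with $j=I-i\in\{0,\ldots,J\}$. This index is admissible because $I=J+1$. Positivity of $\beta_{I-i}$ and $\pi_i$ makes the denominator in \eqref{individual claims ratios} nonzero, and it gives
\begin{equation*}
\E[\widehat{\kappa}_i]=\frac{\E[C_{i,I-i}]}{\beta_{I-i}\,\pi_i}=\frac{\beta_{I-i}\,\kappa\,\pi_i}{\beta_{I-i}\,\pi_i}=\kappa .
\end{equation*}

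\emph{Cape Cod estimator.} I will use the second representation in \eqref{CC kappa}, which writes $\widehat{\kappa}^{\rm CC}$ as a weighted average of the $\widehat{\kappa}_i$. The weights are $w_i=\beta_{I-i}\pi_i/\sum_{k}\beta_{I-k}\pi_k$; they are deterministic, positive, and sum to one. Linearity of expectation and the first step then give
\begin{equation*}
\E[\widehat{\kappa}^{\rm CC}]=\sum_{i=1}^I w_i\,\E[\widehat{\kappa}_i]=\kappa\sum_{i=1}^I w_i=\kappa .
\end{equation*}
Alternatively, one can take expectations directly in the first representation: the numerator has mean $\kappa\sum_i\beta_{I-i}\pi_i$, and this cancels against the deterministic denominator.

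There is no real obstacle in this argument. The only point to check is that every quantity in the denominators is non-random and strictly positive, so that expectation passes through the ratios. This holds because $\pi_i$ and $\beta_j$ are model parameters here, not estimates. The next remark is that unbiasedness would generally fail once the $\beta_j$ are replaced by CL-implied estimates.
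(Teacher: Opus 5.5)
Your proposal is correct and matches what the paper intends: it omits the proof as ``straightforward,'' and the argument it has in mind is exactly yours, namely linearity of expectation with the deterministic, positive $\beta_{I-i}$ and $\pi_i$. Both $\widehat{\kappa}_i$ and the weighted average $\widehat{\kappa}^{\rm CC}$ then have mean $\kappa$.
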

The proof of this lemma is straightforward.
Under an additional variance assumption 
\begin{equation}\label{BLUE assumption}
  {\rm Var} (C_{i,j})= \beta_j\, \tau^2 \, \pi_i
  \qquad \text{ for $1\le i \le I$ and $0\le j \le J$,}
\end{equation}
the estimator $\widehat{\kappa}^{\rm CC}$ for $\kappa$ is the best linear unbiased estimator (BLUE) for the claims ratio $\kappa$; in our further developments we are not going to rely on such a variance assumption \eqref{BLUE assumption}.

\begin{defi}[CC predictor]
The CC predictors at time $I$ for the ultimate claims of accident years $I-J+1\le i \le I$ are defined by
\begin{equation}
\widehat{C}_{i,J}^{\rm CC}= C_{i,I-i} + \left(1-\beta_{I-i}\right) 
\widehat{\kappa}^{\rm CC} \,\pi_i.
\end{equation}
\end{defi}
We generally assume that the premiums $\pi_i>0$ are given and known, and the general interest lies in estimating the claims development pattern $(\beta_j)_{0\leq j\leq J}$ for explicitly computing the CC predictors
$\widehat{C}_{i,J}^{\rm CC}$. Before discussing this in detail, we would like to relate this claims development pattern $(\beta_j)_{0\leq j\leq J}$ to the chain-ladder (CC) factors of Mack's CL model \cite{Mack}.

\begin{model ass}[Mack's distribution-free CL model \cite{Mack}]
\label{distribution-free CL model assumptions}~
\begin{itemize}
\item Cumulative claims $(C_{i,j})_{0\le j \le J}$ of different accident
years $1\le i \le I$ are independent and strictly positive, a.s.
\item There exist positive CL factors $(f_j)_{j=0}^{J-1}$ and positive variance parameters  $(\sigma^2_j)_{j=0}^{J-1}$
such that for all $1\le i \le I$ and $0 \le j  \le J-1$
\begin{eqnarray*}
\E \left[\left.C_{i,j+1}\right| C_{i,0}, \ldots, C_{i,j}
\right]&=& f_j\, C_{i,j},\\
{\rm Var} \left(\left.C_{i,j+1}\right| C_{i,0}, \ldots, C_{i,j}
\right)&=& \sigma^2_j\, C_{i,j}.
\end{eqnarray*}
\end{itemize}
\end{model ass}
An immediate consequence of Mack's CL model assumptions is
\begin{equation*}
\E \left[C_{i,j}\right] = \prod_{k=0}^{j-1}f_k\, \E \left[C_{i,0}\right]  
= \frac{\prod_{k=0}^{J-1}f_k}{\prod_{k=j}^{J-1}f_k}\, \E \left[C_{i,0}\right]
= \beta_j\, \frac{\E \left[C_{i,0}\right]}{\beta_0},
\end{equation*}
where we identify the parameters (an empty product is set equal to 1) as follows
\begin{equation}\label{pattern known CL}
\beta_{j}=  \prod_{k=j}^{J-1}f^{-1}_{k} \qquad \text{ for $0 \le j\le J$}.
\end{equation}
By setting $\kappa \pi_i = \E[C_{i,0}]/\beta_0$, we note that Mack's CL model satisfies the CC Model Assumptions \ref{Model-Assumptions-Cape-Cod}. 
Naturally, a constant claims ratio $\kappa$ in the identity $\kappa \pi_i = \E[C_{i,0}]/\beta_0$ means that we assume stationarity in the claims ratios and that the exposures $\pi_i$ properly discriminate the expected ultimate claims $\E[C_{i,J}]$ of the different accident years $i$.
This connection then provides another interesting view on the CC method; see Saluz \cite[Section 1.3]{Saluz}. Namely, the CC predictor is given by
\begin{equation}\label{CC predictor 00}
  \widehat{C}_{i,J}^{\rm CC}= C_{i,I-i} + \beta_{I-i}\,\widehat{\kappa}^{\rm CC} \,\pi_i
  \left(\prod_{j=I-i}^{J-1}f_{j}-1\right),
\end{equation}
whereas the corresponding CL predictor is given by
\begin{equation}\label{CL predictor 00}
  \widehat{C}_{i,J}^{\rm CL}= C_{i,I-i} \prod_{j=I-i}^{J-1}f_{j}
=  C_{i,I-i} + C_{i,I-i} \left(\prod_{j=I-i}^{J-1}f_{j}-1\right).
\end{equation}
Thus, going from the CL method to the CC method means that we replace the
observed payment $C_{i,I-i}$ in front of the round bracket in \eqref{CL predictor 00}  by its estimated expected value 
$\beta_{I-i}\,\widehat{\kappa}^{\rm CC} \,\pi_i$, see \eqref{CC predictor 00} and Model Assumptions \ref{Model-Assumptions-Cape-Cod}. Naturally, this estimate $\beta_{I-i}\,\widehat{\kappa}^{\rm CC} \,\pi_i$ is more robust than $C_{i,I-i}$ because it averages over multiple observations.
Note that we have
\begin{equation}\label{CC kappa 3}
\widehat{\kappa}^{\rm CC}
=\sum_{i=1}^I \frac{\beta_{I-i}\, \pi_i}{\sum_{k=1}^I
  \beta_{I-k}\, \pi_k}\, \widehat{\kappa}_i
=\sum_{i=1}^I \frac{\beta_{I-i}\, \pi_i}{\sum_{k=1}^I
\beta_{I-k}\, \pi_k}\, 
\frac{\widehat{C}^{\rm CL}_{i,J}}
{\pi_i},
\end{equation}
thus, the CC claims ratio estimate $\widehat{\kappa}^{\rm CC}$ is the weighted average between the CL predictors divided by the premiums.

Finally, from the above considerations we concluded that Mack's CL model gives an example that complies with
Model Assumptions \ref{Model-Assumptions-Cape-Cod}.
This connection is used for estimating the claims development pattern
$(\beta_j)_{0\leq j\leq J}$ by using Mack's CL factor estimates \cite{Mack} at time $I$ given by
\begin{equation}\label{Mack's CL factor estimates}
\widehat{f}_j = \frac{\sum_{i=1}^{I-j-1} C_{i,j+1}}{\sum_{i=1}^{I-j-1} C_{i,j}}\qquad \text{ for $0 \le j\le  J-1$}.
\end{equation}
These estimates are plugged into \eqref{pattern known CL} for estimating the claims development pattern, giving us the estimates
\begin{equation}\label{pattern unknown CL}
\widehat{\beta}_{j}=  \prod_{k=j}^{J-1}\widehat{f}^{-1}_{k} \qquad \text{ for $0 \le j\le J$}.
\end{equation}
This is the industry standard used, and precisely from this step on we start to deviate from the derivations in Saluz \cite{Saluz}, because she is using an estimate different from \eqref{pattern unknown CL}.


\subsection{Generalized Cape Cod method}
The generalized Cape Cod method (GCC) of Gluck \cite{Gluck} adds an exponential weighting to the CC estimator \eqref{CC kappa 3} for trending, by measuring the time lag to the period that we try to predict. 
This trending is motivated as follows. After \eqref{pattern known CL}, we have considered the identity $\kappa \pi_i = \E[C_{i,0}]/\beta_0$ to connect the CC method to the CL assumption. This consideration requires a stationary claims ratio $\kappa$ and exposures $\pi_i$ that properly discriminate expected ultimate claims $\E[C_{i,J}]$ for different accident years $i$. Since in practice these assumptions are hardly fulfilled, the GCC method of Gluck \cite{Gluck} allows for accident year dependent claims ratio estimates, partly correcting for exposure levels that are not fully correctly specified w.r.t.~the expected ultimate claims differences.
For this, we select a fixed a constant (trending) factor $\lambda \in [0,1]$.
For accident year $1\le i \le I$, we set for the GCC claims ratio estimates
\begin{equation}\label{GCC claims ratio estimate}
\widehat{\kappa}_i^{\rm GCC}(\lambda)
~=~\sum_{l=1}^I \frac{\beta_{I-l}\, \pi_l \, \lambda^{|i-l|}}{\sum_{k=1}^I
\beta_{I-k}\, \pi_k\, \lambda^{|i-k|}}\, \widehat{\kappa}_l
~=~\sum_{l=1}^I \frac{\beta_{I-l}\, \pi_l \, \lambda^{|i-l|}}{\sum_{k=1}^I
\beta_{I-k}\, \pi_k\, \lambda^{|i-k|}}\, \frac{\widehat{C}^{\rm CL}_{l,J}}{\pi_l}.
\end{equation}
The following lemma is again straightforward.
\begin{lemma}For all $\lambda \in [0,1]$ and $1\le i \le I$, we have unbiasedness
  \begin{equation*}
    \E \left[\widehat{\kappa}_i^{\rm GCC}(\lambda) \right] = \kappa.
  \end{equation*}
\end{lemma}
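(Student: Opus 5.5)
The claim follows from linearity of expectation, using the first expression in \eqref{GCC claims ratio estimate}. There $\widehat{\kappa}_i^{\rm GCC}(\lambda)$ is written as a weighted average of the individual claims ratio estimators $\widehat{\kappa}_l$ from \eqref{individual claims ratios}. The weights are
\begin{equation*}
w_{i,l}(\lambda)=\frac{\beta_{I-l}\,\pi_l\,\lambda^{|i-l|}}{\sum_{k=1}^I \beta_{I-k}\,\pi_k\,\lambda^{|i-k|}} .
\end{equation*}
These weights are deterministic: the premiums $\pi_l$ are known, the pattern $(\beta_j)_j$ is the true (non-random) pattern of Model Assumptions \ref{Model-Assumptions-Cape-Cod}, and $\lambda$ is a fixed constant.

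First I will check that the weights are well defined and sum to one. The denominator is strictly positive for every $\lambda\in[0,1]$, because the term $k=i$ contributes $\beta_{I-i}\pi_i\lambda^0=\beta_{I-i}\pi_i>0$, using the convention $0^0=1$. This convention is also what makes the boundary case $\lambda=0$ reduce to $\widehat{\kappa}_i$ alone, which is the CL case. Summing the numerators over $l$ gives exactly the denominator, so $\sum_{l=1}^I w_{i,l}(\lambda)=1$.

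Second, I will take expectations term by term. Under Model Assumptions \ref{Model-Assumptions-Cape-Cod} we have $\E[C_{l,I-l}]=\beta_{I-l}\kappa\pi_l$, so $\E[\widehat{\kappa}_l]=\kappa$; this is the preceding lemma on the CC estimators. Linearity then gives $\E[\widehat{\kappa}_i^{\rm GCC}(\lambda)]=\sum_l w_{i,l}(\lambda)\,\kappa=\kappa$.

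The only point needing care is to say explicitly that the statement concerns the version with known $\beta_j$. The second expression in \eqref{GCC claims ratio estimate}, with CL predictors, is an algebraic identity valid for the true $f_j$ and $\beta_j$ via \eqref{pattern known CL}. With the estimated $\widehat{\beta}_j$ from \eqref{pattern unknown CL}, the weights would be random and unbiasedness would not follow from this argument.
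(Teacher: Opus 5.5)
Your proof is correct and is the linearity-of-expectation argument the paper has in mind; the paper omits a proof and simply calls the lemma straightforward. The argument combines deterministic weights built from the true pattern $(\beta_j)_j$, which sum to one, with $\E[\widehat{\kappa}_l]=\kappa$, and your remarks on the convention $0^0=1$ at $\lambda=0$ and on the argument failing once the estimated $\widehat{\beta}_j$ replace $\beta_j$ are accurate.
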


The GCC claims ratios allow for trending using an exponentially decaying rate $\lambda^{|i-l|}$ measuring the distance of the considered accident year $l$ to the selected one $i$. Naturally, if we think of
business cycles which will affect the strength of the premiums $\pi_i$, it makes sense to give more weight to neighboring accounting/accident years. Moreover, if we think of trends in claims due to changes in frequency, severity or portfolio mix, it also makes sense to give more weight to loss experience in closer accident years.

We define the GCC predictor of Gluck \cite{Gluck} as follows.
\begin{defi}[GCC predictor]
Select $\lambda \in [0,1]$. The GCC predictor for the ultimate claim at time $I$ for accident year
$I-J+1 \le i \le I$ is defined by
\begin{equation}\label{GCC predictor first}
\widehat{C}_{i,J}^{\rm GCC}(\lambda)= C_{i,I-i} + \left(1-\beta_{I-i}\right) 
\widehat{\kappa}_i^{\rm GCC}(\lambda) \,\pi_i.
\end{equation}
\end{defi}

We start by giving some immediate properties of the GCC predictor.
Naturally, we have the limiting behaviors for all $1\le i \le I$
\begin{equation*}
\lim_{\lambda \uparrow 1} \widehat{\kappa}_i^{\rm GCC}(\lambda)
=\widehat{\kappa}^{\rm CC}
\qquad \text{ and } \qquad
\lim_{\lambda \downarrow 0} \widehat{\kappa}_i^{\rm GCC}(\lambda)
=\widehat{\kappa}_i.
\end{equation*}
This gives us the following two boundary cases.
\begin{cor} For $I-J+1 \le i \le I$, we have
\begin{equation}\label{CC predictor 0}
\lim_{\lambda \uparrow 1} \widehat{C}_{i,J}^{\rm GCC}(\lambda)=\widehat{C}_{i,J}^{\rm GCC}(1)
=\widehat{C}_{i,J}^{\rm CC},
\end{equation}
and
\begin{equation}
\lim_{\lambda \downarrow 0} \widehat{C}_{i,J}^{\rm GCC}(\lambda)=\widehat{C}_{i,J}^{\rm GCC}(0)
=\label{CL predictor 0}
C_{i,I-i}/\beta_{I-i}
=\widehat{C}_{i,J}^{\rm CL},
\end{equation}
using parameter identification \eqref{pattern known CL}.
\end{cor}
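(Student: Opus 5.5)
The plan is to first settle the two limits of the claims ratio estimates $\widehat{\kappa}_i^{\rm GCC}(\lambda)$ from \eqref{GCC claims ratio estimate}. Then I will use the fact that $\widehat{C}_{i,J}^{\rm GCC}(\lambda)$ in \eqref{GCC predictor first} is an affine function of $\widehat{\kappa}_i^{\rm GCC}(\lambda)$ with $\lambda$-independent coefficients $C_{i,I-i}$ and $(1-\beta_{I-i})\pi_i$. Hence both limits pass directly through to the predictor. Each weight $\beta_{I-l}\pi_l\lambda^{|i-l|}\big/\sum_{k}\beta_{I-k}\pi_k\lambda^{|i-k|}$ is a ratio of finite sums of polynomials in $\lambda$. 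Its denominator is strictly positive on $[0,1]$, because the $k=i$ term equals $\beta_{I-i}\pi_i>0$ under the positivity in Model Assumptions \ref{Model-Assumptions-Cape-Cod}. So the weights are continuous on the closed interval $[0,1]$, and each limit equals the value at the endpoint. This justifies writing $\widehat{C}_{i,J}^{\rm GCC}(1)$ and $\widehat{C}_{i,J}^{\rm GCC}(0)$.

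For $\lambda=1$, all factors $\lambda^{|i-l|}$ equal $1$. The weights therefore reduce to those in \eqref{CC kappa}, so $\widehat{\kappa}_i^{\rm GCC}(1)=\widehat{\kappa}^{\rm CC}$. Substituting into \eqref{GCC predictor first} gives exactly the CC predictor of the definition, which proves \eqref{CC predictor 0}.

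For $\lambda=0$ I use the convention $0^0=1$, which is also what the limit $\lambda\downarrow 0$ produces. Then only the term $l=i$ survives, and $\widehat{\kappa}_i^{\rm GCC}(0)=\widehat{\kappa}_i=C_{i,I-i}/(\beta_{I-i}\pi_i)$ by \eqref{individual claims ratios}. Plugging this in gives
\begin{equation*}
C_{i,I-i}+(1-\beta_{I-i})\frac{C_{i,I-i}}{\beta_{I-i}}=\frac{C_{i,I-i}}{\beta_{I-i}}.
\end{equation*}
Under the identification \eqref{pattern known CL} we have $1/\beta_{I-i}=\prod_{j=I-i}^{J-1}f_j$. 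By \eqref{CL predictor 00} this is $\widehat{C}_{i,J}^{\rm CL}$, which proves \eqref{CL predictor 0}.

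The only delicate points are the $0^0$ convention at $\lambda=0$ and the strict positivity of the normalizing denominator, which the $k=i$ term guarantees. Everything else is direct substitution, so I expect no real obstacle.
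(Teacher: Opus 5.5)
Your proposal is correct and follows the paper's route: the paper notes that $\widehat{\kappa}_i^{\rm GCC}(\lambda)\to\widehat{\kappa}^{\rm CC}$ as $\lambda\uparrow 1$ and $\widehat{\kappa}_i^{\rm GCC}(\lambda)\to\widehat{\kappa}_i$ as $\lambda\downarrow 0$, then passes these limits through the affine definition of the predictor. Your extra justification supplies details the paper leaves implicit, namely the $k=i$ term bounding the denominator below by $\beta_{I-i}\pi_i>0$, continuity of the weights on $[0,1]$, and the convention $0^0=1$.
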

Under Mack's CL Model Assumptions \ref{distribution-free CL model assumptions}, there is the ultimate claim predictor at time $I$
\begin{equation*}
\widehat{C}_{i,J}^{\rm CL}
= \E \left[\left. C_{i,J} \right| C_{i,0}, \ldots, C_{i,I-i}\right]
=C_{i,I-i}\prod_{j=I-i}^{J-1} f_j=C_{i,I-i}/\beta_{I-i}.
\end{equation*}
As a result of \eqref{CC predictor 0} and \eqref{CL predictor 0}, the GCC method continuously interpolates between the CC and the CL predictors as a function of $\lambda \in [0,1]$.

\subsection{Credibility reformulations of the Generalized Cape Cod method}
The previous CL interpretation \eqref{CL predictor 0} allows us to rewrite the GCC ratio estimate as follows
\begin{equation*}
\widehat{\kappa}_i^{\rm GCC}(\lambda)
  =
\sum_{l=1}^I \frac{\beta_{I-l}\, \pi_l \, \lambda^{|i-l|}}{\sum_{k=1}^I
\beta_{I-k}\, \pi_k\, \lambda^{|i-k|}}\, \frac{\widehat{C}_{l,J}^{\rm CL}}{\pi_l}
~=~
\sum_{l=1}^I \alpha_{i,l}(\lambda)\, \frac{\widehat{C}_{l,J}^{\rm CL}}{\pi_l},
\end{equation*}
with positive weights
\begin{equation}\label{weights alpha}
\alpha_{i,l}(\lambda)=\frac{\beta_{I-l}\, \pi_l \, \lambda^{|i-l|}}{\sum_{k=1}^I
\beta_{I-k}\, \pi_k\, \lambda^{|i-k|}} ~\in~ (0,1).
\end{equation}
This allows us to rewrite the GCC predictor as a weighted average
\begin{eqnarray}\nonumber
\widehat{C}_{i,J}^{\rm GCC}(\lambda)&=& C_{i,I-i} + \left(1-\beta_{I-i}\right) 
\widehat{\kappa}_i^{\rm GCC}(\lambda) \,\pi_i
\\&=&\label{second representation of GCC}
\sum_{l=1}^I \alpha_{i,l}(\lambda)\, \left(C_{i,I-i} + \frac{\pi_i}{\pi_l} \left(1-\beta_{I-i}\right) 
\widehat{C}_{l,J}^{\rm CL}\right).
\end{eqnarray}
This has an interesting interpretation. The GCC predictor is a weighted average, $(\alpha_{i,l}(\lambda))_{1\le l  \le I}$, over the CL predictors $\widehat{C}_{l,J}^{\rm CL}$ of all accident years $1\le l \le I$, adjusted to the premium level $\pi_i/\pi_l$ of the considered accident year $i$.

\begin{rems}\label{remarks behavior}\normalfont
The weights $\alpha_{i,l}(\lambda)$ defined in \eqref{weights alpha} consider three different effects. First, they are decreasing in the chronological distance between accident year $l$ and the selected accident year $i$ through the weightings $\lambda^{|i-l|}$ for $\lambda\in (0,1)$. Second, they assign higher credibility to the CL predictors $\widehat{C}_{l,J}^{\rm CL}$
of more mature accident years, assumed that the development pattern $(\beta_j)_{0 \le j \le J}$ is non-decreasing. Third, accident years with a higher exposure, bigger $\pi_l$, receive a higher weight.
\end{rems}

From \eqref{second representation of GCC}, we can even go one step further
by rewriting the first term in the bracket on the right-hand side.
\begin{cor}
We have the following equivalent formulations
for $1 \le i \le I$
\begin{eqnarray*}\nonumber
\widehat{C}_{i,J}^{\rm GCC}(\lambda)&=&
\sum_{l=1}^I \alpha_{i,l}(\lambda)\, \left(\beta_{I-i}\,\widehat{C}_{i,J}^{\rm CL} + \left(1-\beta_{I-i}\right)\frac{\pi_i}{\pi_l}\, 
\widehat{C}_{l,J}^{\rm CL}\right)
\\&=&\nonumber
\omega_{i,i}(\lambda)\,\widehat{C}_{i,J}^{\rm CL}+
\sum_{l=1,\, l\neq i}^I \omega_{i,l}(\lambda)\,\frac{\pi_i}{\pi_l}\, 
\widehat{C}_{l,J}^{\rm CL},
\end{eqnarray*}
with credibility weights
\begin{equation*}
\omega_{i,l}(\lambda) = \beta_{I-i} \,\mathds{1}_{\{i=l\}} + \left(1-\beta_{I-i}\right) \alpha_{i,l}(\lambda)~\in~[0,1].
\end{equation*}
\end{cor}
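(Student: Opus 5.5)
The plan is to start from the weighted-average representation \eqref{second representation of GCC} and use two facts. First, the weights $\alpha_{i,l}(\lambda)$ from \eqref{weights alpha} sum to one over $l$. Second, under the identification \eqref{pattern known CL}, the CL interpretation \eqref{CL predictor 0} gives $C_{i,I-i}=\beta_{I-i}\,\widehat{C}_{i,J}^{\rm CL}$.

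\textbf{Step 1: the first line.} Substituting $C_{i,I-i}=\beta_{I-i}\widehat{C}_{i,J}^{\rm CL}$ into the first term of each bracket in \eqref{second representation of GCC} gives the first line directly. That representation is itself obtained from \eqref{GCC predictor first} by writing $C_{i,I-i}=\sum_l\alpha_{i,l}(\lambda)C_{i,I-i}$, which uses $\sum_l\alpha_{i,l}(\lambda)=1$.

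\textbf{Step 2: the second line.} I pull the first summand out of the sum, again using $\sum_l \alpha_{i,l}(\lambda)=1$:
\begin{equation*}
\sum_{l=1}^I \alpha_{i,l}(\lambda)\,\beta_{I-i}\,\widehat{C}_{i,J}^{\rm CL}=\beta_{I-i}\,\widehat{C}_{i,J}^{\rm CL}.
\end{equation*}
The remaining sum $\sum_l (1-\beta_{I-i})\alpha_{i,l}(\lambda)\frac{\pi_i}{\pi_l}\widehat{C}_{l,J}^{\rm CL}$ is then split into the term $l=i$, where $\pi_i/\pi_l=1$, and the terms $l\neq i$. Collecting the coefficient of $\widehat{C}_{i,J}^{\rm CL}$ gives $\beta_{I-i}+(1-\beta_{I-i})\alpha_{i,i}(\lambda)=\omega_{i,i}(\lambda)$. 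For $l\ne i$ the coefficient is $(1-\beta_{I-i})\alpha_{i,l}(\lambda)=\omega_{i,l}(\lambda)$, which matches the stated definition with the indicator.

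\textbf{Step 3: the range $[0,1]$.} Here I need $\beta_{I-i}\in[0,1]$. I will take this from the normalization $\beta_J=1$ together with the implicit assumption in Remarks \ref{remarks behavior} that the pattern is non-decreasing, i.e.\ $f_k\ge1$. Given that, each $\omega_{i,l}(\lambda)$ is a convex combination of numbers in $[0,1]$:
\begin{itemize}
\item for $l=i$ it is a combination of $1$ and $\alpha_{i,i}(\lambda)\in(0,1)$;
\item for $l\neq i$ it is a combination of $0$ and $\alpha_{i,l}(\lambda)$.
\end{itemize}
So $\omega_{i,l}(\lambda)\in[0,1]$. As a byproduct, the weights sum to one: $\sum_l\omega_{i,l}(\lambda)=\beta_{I-i}+(1-\beta_{I-i})=1$.

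\textbf{Main obstacle.} The algebra is routine. The only delicate point is justifying $\beta_{I-i}\le 1$, which is needed for $\omega_{i,l}(\lambda)\in[0,1]$. I would state explicitly that this relies on the non-decreasing pattern. The identity itself holds without it.
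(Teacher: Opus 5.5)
Your proposal is correct and follows the paper's route. The paper also starts from \eqref{second representation of GCC}, replaces $C_{i,I-i}$ by $\beta_{I-i}\,\widehat{C}_{i,J}^{\rm CL}$ via \eqref{CL predictor 0}, and collects the coefficient of $\widehat{C}_{i,J}^{\rm CL}$ using $\sum_l \alpha_{i,l}(\lambda)=1$. The paper leaves implicit what you state explicitly: the identity holds for any positive pattern, but $\omega_{i,l}(\lambda)\in[0,1]$ needs $\beta_{I-i}\le 1$, i.e.\ the non-decreasing pattern assumed in Remarks \ref{remarks behavior}.
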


The focus in this corollary is on expressing the GCC predictor in terms of the CL predictors.
Alternatively, we can express the GCC claims ratio estimates \eqref{GCC claims ratio estimate}
as a credibility weighted average between accident year $i$ and the other accident years' experiences
\begin{equation}\label{another credibility decomposition}
\widehat{\kappa}_i^{\rm GCC}(\lambda)
= \frac{\beta_{I-i}\, \pi_i}{\sum_{k=1}^I
  \beta_{I-k}\, \pi_k\, \lambda^{|i-k|}}\, \widehat{\kappa}_i+
\left(1-\frac{\beta_{I-i}\, \pi_i}{\sum_{k=1}^I
\beta_{I-k}\, \pi_k\, \lambda^{|i-k|}}\right)
  \frac{\sum_{l=1, \, l\neq i}^IC_{l,I-l}\lambda^{|i-l|}}{\sum_{k=1, \, k\neq i}^I\beta_{I-k }\, \pi_k\, \lambda^{|i-k|}}.
\end{equation}
This decomposition again reflects the three-dimensional behavior discussed in Remarks \ref{remarks behavior}. For the boundary case $\lambda \downarrow 0$, it converges to the CL method, and for 
$\lambda \uparrow 1$ to the CC method, the latter can be interpreted as full data pooling.

\section{Prediction uncertainty}
\label{sec: Prediction uncertainty}
\subsection{Mack's CL prediction uncertainty formula}
Before focusing on the CC and the GCC predictors, we recall Mack's CL prediction uncertainty formula; see Mack \cite[Theorem 3]{Mack}.
We come back to the CL predictors  
at time $I$ given by
\begin{equation}\label{CL predictor true}
\widehat{C}_{i,J}^{\rm CL}
= \E \left[\left. C_{i,J} \right| C_{i,0}, \ldots, C_{i,I-i}\right]
= C_{i,I-i}\prod_{j=I-i}^{J-1} f_j.
\end{equation}
The general difficulty in practice is that the CL factors 
$(f_j)_{j=0}^{J-1}$ are unknown and they need to be replaced by their estimates
\eqref{Mack's CL factor estimates} at time $I$. This gives the (estimated) CL predictors at time $I$ (we use a double-hat notation to emphasize the difference to \eqref{CL predictor true}) 
\begin{equation}\label{CL predictor estimated}
\widehat{\widehat{C}_{i,J}^{\rm CL}}
= \widehat{\E} \left[\left. C_{i,J} \right| C_{i,0}, \ldots, C_{i,I-i}\right]
=C_{i,I-i}\prod_{j=I-i}^{J-1} \widehat{f}_j,
\end{equation}
this uses estimates \eqref{Mack's CL factor estimates}. Mack \cite{Mack} derived an estimate of the prediction error in terms of 
an estimated conditional mean squared error of prediction (MSEP) defined as follows
\begin{eqnarray*}
\operatorname{msep}_{\sum_ i C_{i,J}|{\cal D}_I}\left(\sum_i 
\widehat{\widehat{C}_{i,J}^{\rm CL}}\right)
&=& \E \left[\left.\left(\sum_i C_{i,J}
-\sum_i 
\widehat{\widehat{C}_{i,J}^{\rm CL}} \right)^2 \right| {\cal D}_I \right]
\\&=&
\underbrace{\operatorname{Var}\left(\left. \sum_i C_{i,J} \right| {\cal D}_I\right)}_{=: \Psi^2 \text{ (process uncertainty)}}
+
\underbrace{\left(\sum_i \widehat{C}^{\rm CL}_{i,J}
-\sum_i 
\widehat{\widehat{C}_{i,J}^{\rm CL}} \right)^2}_{=: \Delta_{\rm CL}^2 \text{ (parameter estimation error)}},
\end{eqnarray*}
where ${\cal D}_I= \{C_{i,j};\, i+j \le I\}$ denotes the available observations at time $I$. Remark that the process uncertainty term $\Psi^2$ is model-free (at this stage) and the parameter estimation error term $\Delta_{\rm CL}^2$ is model dependent because it explicitly uses the CL predictors $\widehat{\widehat{C}_{i,J}^{\rm CL}}$.

Under Mack's CL Model Assumptions \ref{distribution-free CL model assumptions}, one can explicitly compute the process uncertainty term $\Psi^2$ -- we add a lower index $\Psi^2_{\rm CL}$ because now it becomes model specific -- and replacing all parameters by their estimates
provides us with the estimate (within the CL model), see Mack \cite[Theorem 3]{Mack} and W\"uthrich--Merz \cite[Lemma 3.6]{WM2008},
\begin{equation}\label{process Mack}
\widehat{\Psi}_{\rm CL}^2 = \sum_{i=I-J+1}^I
\left(\widehat{\widehat{C}_{i,J}^{\rm CL}}\right)^2\sum_{j=I-i}^{J-1} 
\frac{\widehat{\sigma}_j^2/\widehat{f}_j^2}{\widehat{\widehat{C}_{i,j}^{\rm CL}}},
\end{equation}
for the variance parameter estimates $(\widehat{\sigma}_j^2)_{j=0}^{J-1}$ given by, we refer to Mack \cite[Section 3]{Mack} and W\"uthrich--Merz \cite[Section 3.2]{WM2008},
\begin{equation*}
\widehat{\sigma}_j^2=
\frac{1}{I-(j+1)}\sum_{i=1}^{I-(j+1)}C_{i,j}\left(\frac{C_{i,j+1}}{C_{i,j}}-
   \widehat{f}_j\right)^2 \qquad \text{ for $j=0,\ldots, J-2$,}
\end{equation*}
and for $j=J-1$, we set $\widehat{\sigma}_{J-1}^2 = \min \{\widehat{\sigma}_{J-2}^4/\widehat{\sigma}_{J-3}^2,\widehat{\sigma}_{J-3}^2,\widehat{\sigma}_{J-2}^2\}$.

\medskip

The parameter estimation error term $\Delta_{\rm CL}^2$ is more difficult, and Mack's \cite{Mack} main achievement was to provide an estimate for this term that is based on a martingale argument. Mack's \cite{Mack} estimate is given by
\begin{equation}\label{parameter Mack}
\widehat{\Delta}_{\rm CL}^2= \sum_{i=I-J+1}^I
\left(\widehat{\widehat{C}_{i,J}^{\rm CL}}\right)^2
\sum_{j=I-i}^{J-1}
\frac{\widehat{\sigma}_j^2/\widehat{f}_j^2}
{\sum_{\ell=1}^{I-j-1}C_{\ell,j}}+2\!\!
\sum_{I-J+1\le i < k \le I}
\widehat{\widehat{C}_{i,J}^{\rm CL}}
\widehat{\widehat{C}_{k,J}^{\rm CL}}
\sum_{j=I-i}^{J-1}
\frac{\widehat{\sigma}_j^2/\widehat{f}_j^2}
{\sum_{\ell=1}^{I-j-1}C_{\ell,j}}.
\end{equation}
Interestingly, R\"ohr \cite{Rohr} derived the identical parameter estimation error estimate
\eqref{parameter Mack} using a rather different method that is based on studying error propagation from physics. The main idea is to consider the impact of parameter estimation error by analyzing a first order Taylor expansion of 
$\Delta_{\rm CL}$. For this, we interpret the CL predictor as a function
of the CL factors $(f_j)_{j=0}^{J-1}$, i.e.,
\begin{equation*}
\widehat{C}_{i,J}^{\rm CL}=
\widehat{C}_{i,J}^{\rm CL}\left((f_j)_{j=0}^{J-1}\right).
\end{equation*}
This gives us estimated CL predictors
\begin{equation*}
\widehat{\widehat{C}_{i,J}^{\rm CL}}=
\widehat{C}_{i,J}^{\rm CL}\left((\widehat{f}_j)_{j=0}^{J-1}\right).
\end{equation*}
Considering a first Taylor expansion to $\Delta_{\rm CL}$ around $(f_j)_{j=0}^{J-1}$ gives us the approximation
\begin{equation}\label{Taylor Rohr CL}
\Delta_{\rm CL} = \sum_i\widehat{C}_{i,J}^{\rm CL}\left((f_j)_{j=0}^{J-1}\right)
-\widehat{C}_{i,J}^{\rm CL}\left((\widehat{f}_j)_{j=0}^{J-1}\right)
~\approx~ - \sum_{j=0}^{J-1} \partial_{f_j}\left[\sum_i
\widehat{C}_{i,J}^{\rm CL}\left((f_j)_{j=0}^{J-1}\right)\right]
\left(\widehat{f}_j - f_j \right).
\end{equation}
This approximation then allowed R\"ohr \cite{Rohr} to estimate 
$\Delta_{\rm CL}^2$ under Mack's CL assumptions which exactly provides
\eqref{parameter Mack}. We use this error propagation method in a similar spirit to R\"ohr \cite{Rohr} in order to assess the estimation error within the GCC prediction framework.


\subsection{Error propagation in the Cape Cod method}
\label{sec: Error propagation in the Cape Cod method}

Using the error propagation method of R\"ohr \cite{Rohr} explained in the previous section, we interpret the GCC predictor \eqref{GCC predictor first} as a function of the CL factors $(f_j)_{j=0}^{J-1}$ using the identification \eqref{pattern known CL}. That is, 
\begin{eqnarray*}
\widehat{C}_{i,J}^{\rm GCC}(\lambda)&=&
                                        \widehat{C}_{i,J}^{\rm GCC}\left((f_j)_{j=0}^{J-1}\right)
                                        \\&=&  C_{i,I-i} + \left(1-\prod_{j=I-i}^{J-1}f^{-1}_{j}\right)\left[ \sum_{l=1}^I \frac{C_{l,I-l} \, \lambda^{|i-l|}}{\sum_{k=1}^I \prod_{j=I-k}^{J-1}f^{-1}_{j}
\, \pi_k\, \lambda^{|i-k|}} \right]\pi_i.
\end{eqnarray*}
Our goal is to understand the sensitivities in the CL factors so that we can analyze the first order Taylor approximation
\begin{eqnarray}\nonumber
\Delta_{\rm GCC}(\lambda) &:=& \sum_{i=1}^I\widehat{C}_{i,J}^{\rm GCC}\left((f_j)_{j=0}^{J-1}\right)
-\widehat{C}_{i,J}^{\rm GCC}\left((\widehat{f}_j)_{j=0}^{J-1}\right)
\\\label{Taylor Rohr GCC}
&\approx& - \sum_{j=0}^{J-1} \partial_{f_j}\left[\sum_{i=1}^I
\widehat{C}_{i,J}^{\rm GCC}\left((f_j)_{j=0}^{J-1}\right)\right]
\left(\widehat{f}_j - f_j \right),
\end{eqnarray}
when replacing the CL factors $f_j$ by their estimates $\widehat{f}_j$.
Our main interest lies in the derivatives of the square bracket on the right-hand side of \eqref{Taylor Rohr GCC}. As common in error propagation, we map this problem to the log-scale. We briefly illustrate the consequences of this.
Consider a differentiable positive function $x\mapsto h(x)>0$. We would like to understand its sensitivity in $x>0$. Naturally, we compute its derivative
\begin{equation*}
  \partial_x h(x) =h'(x).
\end{equation*}
This gives the first order sensitivity in $x>0$. Studying this sensitivity in relative terms requires one to consider the log-scale
\begin{equation*}
  q(x):=\partial_{\log x} \log h(x) = \frac{\partial_{\log x} h(x)}{h(x)}
  = \frac{\partial_{\log x} h(\exp\{\log x\})}{h(x)}
  = \frac{h'(x)x}{h(x)}.
\end{equation*}
This implies 
\begin{equation*}
  \partial_x h(x) =h'(x)=\frac{q(x)}{x}\,h(x).
\end{equation*}
This allows one to represent the first order Taylor approximation around $x$ in relative terms, providing the first order errors
\begin{equation*}
  h(\widehat{x})-h(x)~ \approx ~ h'(x)(\widehat{x}-x) = h(x)\,  \frac{q(x)}{x}\, (\widehat{x}-x).
\end{equation*}
We provide the computation of $q$ in the GCC context to be able to
analyze the right-hand side of \eqref{Taylor Rohr GCC}.
For $t\in \{0, \ldots, J-1\}$, we aim at computing the sensitivities w.r.t.~the CL factor $f_t$
\begin{equation*}
  q_t(\lambda) :=  \partial_{\log f_t} \log \left(\sum_{i=1}^I  \widehat{C}_{i,J}^{\rm GCC}\left((f_j)_{j=0}^{J-1}\right)\right)
  = \frac{\partial_{\log f_t}\left(\sum_{i=1}^I  \widehat{C}_{i,J}^{\rm GCC}\left((f_j)_{j=0}^{J-1}\right)\right)}{\sum_{i=1}^I  \widehat{C}_{i,J}^{\rm GCC}\left((f_j)_{j=0}^{J-1}\right)}.
\end{equation*}

We have the following theorem, its proof is provided in the appendix.

\begin{theo}\label{theorem error propagation}
  Set $I=J+1$. For $\lambda \in [0,1]$ and $t\in \{0, \ldots, J-1\}$, we have
 \begin{equation*}
  q_t(\lambda)   = \frac{\sum_{i=I-t}^I  \beta_{I-i}\,
    \widehat{\kappa}_i^{\rm GCC}(\lambda)\, \pi_i
    + \sum_{i=1}^I \left(1-\beta_{I-i}\right)
    \widehat{\kappa}_i^{\rm GCC}(\lambda)\, \pi_i\,
    \frac{
    \sum_{k=I-t}^I  \beta_{I-k}
    \, \pi_k\, \lambda^{|i-k|}}{\sum_{k=1}^I \beta_{I-k}\, \pi_k\, \lambda^{|i-k|}} }{\sum_{i=1}^I  \widehat{C}_{i,J}^{\rm GCC}(\lambda)}.
\end{equation*} 
\end{theo}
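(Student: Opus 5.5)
The plan is to differentiate the explicit formula for $\widehat{C}_{i,J}^{\rm GCC}((f_j)_{j=0}^{J-1})$ term by term with respect to $\log f_t$. Then I divide by $\sum_i \widehat{C}_{i,J}^{\rm GCC}(\lambda)$, which is exactly the denominator in the claimed expression. The observed claims $C_{i,I-i}$ and premiums $\pi_i$ do not depend on the CL factors. The whole dependence therefore enters through the pattern $\beta_{I-i}=\prod_{j=I-i}^{J-1}f_j^{-1}$, which appears in two places: the factor $(1-\beta_{I-i})$, and the normalizing denominator $S_i:=\sum_{k=1}^I \beta_{I-k}\pi_k\lambda^{|i-k|}$ inside the square bracket. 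Writing the bracket as $N_i/S_i$ with $N_i=\sum_l C_{l,I-l}\lambda^{|i-l|}$, I note that $N_i$ is free of $f$ and that $\widehat{\kappa}_i^{\rm GCC}(\lambda)=N_i/S_i$ by \eqref{GCC claims ratio estimate} and \eqref{individual claims ratios}.

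\textbf{Step 1 (log-derivative of the pattern).} Since $\log\beta_{I-k}=-\sum_{j=I-k}^{J-1}\log f_j$, we get $\partial_{\log f_t}\beta_{I-k}=-\beta_{I-k}\mathds{1}_{\{I-k\le t\}}$. The index condition $I-k\le t$ means $k\ge I-t$. This is where $I=J+1$ is used: it guarantees that every accident year $k$ with $k\ge I-t$ actually involves the factor $f_t$ in its pattern, with $I-k$ ranging over $0,\ldots,J$.

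\textbf{Step 2 (product rule).} For each $i$ there are two contributions. The first comes from the factor $(1-\beta_{I-i})$ and equals $+\beta_{I-i}\mathds{1}_{\{i\ge I-t\}}\,\widehat{\kappa}_i^{\rm GCC}(\lambda)\,\pi_i$. The second comes from $1/S_i$: it equals $(1-\beta_{I-i})\pi_i N_i\cdot(-\partial_{\log f_t}S_i)/S_i^2$, and by Step 1 we have $-\partial_{\log f_t}S_i=\sum_{k=I-t}^I\beta_{I-k}\pi_k\lambda^{|i-k|}$. Rewriting $N_i/S_i=\widehat{\kappa}_i^{\rm GCC}(\lambda)$ turns this second contribution into $(1-\beta_{I-i})\,\widehat{\kappa}_i^{\rm GCC}(\lambda)\,\pi_i\,\sum_{k=I-t}^I\beta_{I-k}\pi_k\lambda^{|i-k|}/S_i$.

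\textbf{Step 3 (assemble).} Summing both contributions over $i$ and dividing by $\sum_i\widehat{C}_{i,J}^{\rm GCC}(\lambda)$ gives exactly the stated formula. Both contributions are nonnegative, which serves as a consistency check: the GCC prediction increases in every CL factor.

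The main obstacle is bookkeeping rather than analysis. The points to check are the sign conventions, that the indicator range $k\ge I-t$ is correct, and the boundary value $\lambda=0$. For $\lambda=0$ we take $\lambda^0=1$, so $S_i=\beta_{I-i}\pi_i$ and the formula should collapse to the CL sensitivity; this is a useful check that the expression is right.
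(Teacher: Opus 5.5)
Your proposal is correct and follows the paper's proof essentially step for step. Both use the log-derivative $\partial_{\log f_t}\beta_{I-k}=-\mathds{1}_{\{t\ge I-k\}}\beta_{I-k}$, apply the product rule to $(1-\beta_{I-i})$ times the bracket and the quotient rule to the normalizer $\sum_k\beta_{I-k}\pi_k\lambda^{|i-k|}$, and identify $\widehat{\kappa}_i^{\rm GCC}(\lambda)$ before summing over $i$.

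One minor caveat concerns only your side remark: the second contribution is nonnegative only when $\beta_{I-i}\le 1$ (e.g.\ when all $f_j\ge 1$), which the model assumptions do not impose, so ``the GCC prediction increases in every CL factor'' is not a general consistency check, though this does not affect the derivation.
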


\medskip

Based on Theorem \ref{theorem error propagation}, we obtain the following first order approximation.

\begin{cor}\label{corollary 1st order Taylor}
  For CL factors $(f_j)_{j=0}^{J-1}$ and their perturbed version $(\widehat{f}_j)_{j=0}^{J-1}$, we receive the following first order Taylor approximation for the GCC predictor 
\begin{equation*}
\Delta_{\rm GCC}(\lambda)~\approx~
-\left(\sum_{i=1}^I  \widehat{C}_{i,J}^{\rm GCC}(\lambda)\right)\,
  \sum_{j=0}^{J-1} \frac{q_j(\lambda)}{f_j} \left(\widehat{f}_j - f_j\right).
 \end{equation*}
\end{cor}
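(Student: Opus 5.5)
The statement to prove is Corollary \ref{corollary 1st order Taylor}. It follows by combining the Taylor approximation \eqref{Taylor Rohr GCC} with the log-scale identity $\partial_x h(x) = \frac{q(x)}{x}\,h(x)$, applied coordinatewise to the multivariate function
\[
H\left((f_j)_{j=0}^{J-1}\right):=\sum_{i=1}^I \widehat{C}_{i,J}^{\rm GCC}\left((f_j)_{j=0}^{J-1}\right).
\]

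The first step is to note that $H>0$ in a neighbourhood of the true CL factors. Indeed, $C_{l,I-l}>0$ almost surely and $\pi_i>0$, and since the $f_j$ are positive and close to the true values we have $1-\beta_{I-i}\ge 0$ in the relevant case of non-decreasing patterns. More simply, positivity is only needed at the point of expansion, where it holds by Model Assumptions \ref{distribution-free CL model assumptions}. Moreover, $H$ is a smooth, in fact rational, function of $(f_j)_j$ away from $f_j=0$, because the denominators $\sum_k \prod_j f_j^{-1}\pi_k\lambda^{|i-k|}$ are strictly positive. Hence $\log H$ and its partial derivatives in $\log f_t$ are well defined.

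The second step handles each coordinate $t$ separately, holding the other CL factors fixed. The chain rule gives $\partial_{\log f_t} H = f_t\,\partial_{f_t} H$. By the definition of $q_t(\lambda)$ this reads $q_t(\lambda) = f_t\,\partial_{f_t}H / H$, that is,
\[
\partial_{f_t} H = \frac{q_t(\lambda)}{f_t}\, H\left((f_j)_{j=0}^{J-1}\right).
\]
This is exactly the one-dimensional identity $h'(x)=\frac{q(x)}{x}h(x)$ stated before Theorem \ref{theorem error propagation}, applied to the partial function $f_t\mapsto H$. Theorem \ref{theorem error propagation} supplies the explicit form of $q_t(\lambda)$, but the corollary needs only its existence and definition.

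The third step substitutes this expression for each partial derivative into the right-hand side of \eqref{Taylor Rohr GCC}. Pulling the common factor $H=\sum_i \widehat{C}_{i,J}^{\rm GCC}(\lambda)$ out of the sum over $j$ gives
\[
\Delta_{\rm GCC}(\lambda)\approx -\Big(\sum_{i=1}^I \widehat{C}_{i,J}^{\rm GCC}(\lambda)\Big)\sum_{j=0}^{J-1}\frac{q_j(\lambda)}{f_j}\big(\widehat f_j-f_j\big),
\]
where all quantities are evaluated at the true factors $(f_j)_j$. The sign follows from the convention in $\Delta_{\rm GCC}$, which is true-value minus estimate.

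No step is a real obstacle; the argument is bookkeeping. The only point needing care is to state clearly that the approximation is a first order multivariate Taylor expansion of $H$ at $(f_j)_j$. The remainder is of second order in $\max_j|\widehat f_j-f_j|$, so the approximation inherits this order directly from \eqref{Taylor Rohr GCC}.
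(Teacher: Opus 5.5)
Your proposal is correct and follows the paper's own (implicit) derivation. You apply the log-scale relation $h'(x)=\frac{q(x)}{x}h(x)$ coordinatewise, i.e.\ $\partial_{f_t}H=\frac{q_t(\lambda)}{f_t}\,H$, substitute it into \eqref{Taylor Rohr GCC}, and use Theorem \ref{theorem error propagation} only for the explicit form of $q_t(\lambda)$.

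One small correction to your first step: the Model Assumptions allow $f_j<1$, hence $\beta_{I-i}>1$, so $H>0$ is not guaranteed at the expansion point. This does not matter, because all you need is $H\neq 0$ there, so that $q_t(\lambda)=\partial_{\log f_t}H/H$ is defined.
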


\medskip

In the limiting case $\lambda \downarrow 0$ we have
 \begin{equation}\label{q0}
 q_j(0)   =
\lim_{\lambda \downarrow 0}  q_j(\lambda)   = \frac{\sum_{i=1}^I \mathds{1}_{\{j\ge I-i\}}\, \widehat{\kappa}_i \pi_i }{\sum_{i=1}^I  \widehat{C}_{i,J}^{\rm CL}}
= \frac{\sum_{i=I-j}^I \widehat{C}_{i,J}^{\rm CL} }{\sum_{i=1}^I  \widehat{C}_{i,J}^{\rm CL}}.
\end{equation} 
In view of Corollary \ref{corollary 1st order Taylor}, this gives in the CL case
\begin{equation}\label{Taylor Rohr CL 2}
\Delta_{\rm CL} 
~\approx~ - \left(\sum_{i=1}^I
\widehat{C}_{i,J}^{\rm CL}\right)\, \sum_{j=0}^{J-1} \frac{q_j(0)}{f_j}\left(\widehat{f}_j - f_j \right),
\end{equation}
this is precisely the CL formula obtained in W\"uthrich--Merz \cite[page 41]{WM2015}.

\medskip

Applying the same steps as in R\"ohr \cite{Rohr}, we estimate the estimation error term in Gluck's \cite{Gluck} GCC method by
\begin{equation}\label{param error estimate}
\widehat{\Delta}_{\rm GCC}^2(\lambda) :=  \left(\sum_{i=1}^I  \widehat{C}_{i,J}^{\rm GCC}(\lambda)\right)^2\,
\sum_{j=0}^{J-1} \widehat{q}^2_j(\lambda) \,\frac{\widehat{\sigma}^2_j/\widehat{f}_j^2}{\sum_{\ell=1}^{I-j-1}C_{\ell,j}},
\end{equation}
we also refer to the CL estimate \eqref{parameter Mack}, and
where in $\widehat{q}^2_j(\lambda)$ we replace all CL parameters $f_j$ by their estimates
$\widehat{f}_j$, which through \eqref{pattern known CL} is mapped to the claims development pattern.

\begin{cor} We have the following results in the limiting cases
\begin{equation*}
\widehat{\Delta}_{\rm CL}^2=\lim_{\lambda \downarrow 0} 
\widehat{\Delta}_{\rm GCC}^2(\lambda) = 
\left(\sum_{i=1}^I  \widehat{C}_{i,J}^{\rm CL}\right)^2\,
\sum_{j=0}^{J-1} \widehat{q}^2_j(0) \,\frac{\widehat{\sigma}^2_j/\widehat{f}_j^2}{\sum_{\ell=1}^{I-j-1}C_{\ell,j}},
\end{equation*}
this coincides with Mack's \cite{Mack} CL parameter estimation error estimate \eqref{parameter Mack}, and for B\"uhlmann's \cite{Bu_unpublished} CC method we have parameter estimation error estimate
\begin{equation*}
\widehat{\Delta}_{\rm CC}^2=
\lim_{\lambda \uparrow 1} 
\widehat{\Delta}_{\rm GCC}^2(\lambda) = 
 \left(\sum_{i=1}^I  \widehat{C}_{i,J}^{\rm CC}\right)^2\,
\sum_{j=0}^{J-1} \widehat{q}^2_j(1) \,\frac{\widehat{\sigma}^2_j/\widehat{f}_j^2}{\sum_{\ell=1}^{I-j-1}C_{\ell,j}},
\end{equation*}
with $q_j(0)$ given in \eqref{q0} and with
 \begin{equation}\label{q1}
  q_j(1)   
   =\frac{\sum_{k=I-j}^I  \beta_{I-k}
    \, \pi_k }{\sum_{k=1}^I \beta_{I-k}\, \pi_k}~
    \frac{\widehat{\kappa}^{\rm CC}\sum_{i=1}^I
    \pi_i}{\sum_{i=1}^I  \widehat{C}_{i,J}^{\rm CC}}.
\end{equation} 
\end{cor}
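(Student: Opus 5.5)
The plan is to obtain both limits by letting $\lambda$ tend to its boundary values in the closed form of $q_t(\lambda)$ from Theorem~\ref{theorem error propagation}, and then to substitute into \eqref{param error estimate}. All sums involved are finite, all denominators $\sum_{k}\beta_{I-k}\pi_k\lambda^{|i-k|}$ are bounded below by $\beta_{I-i}\pi_i>0$, and every expression is a rational function of $\lambda$. Hence $\widehat{\Delta}^2_{\rm GCC}(\lambda)$ is continuous on $[0,1]$ (with the convention $\lambda^0=1$), and the limits equal the values at $\lambda=0$ and $\lambda=1$. This applies equally with the $f_j$ replaced by $\widehat f_j$ throughout, that is, to $\widehat q_j$ and to the double-hat predictors. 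By the earlier corollary, $\widehat{C}^{\rm GCC}_{i,J}(0)=\widehat{C}^{\rm CL}_{i,J}$ and $\widehat{C}^{\rm GCC}_{i,J}(1)=\widehat{C}^{\rm CC}_{i,J}$, so the prefactors are as claimed.

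For $\lambda\downarrow 0$, we have $\lambda^{|i-k|}\to\mathds{1}_{\{i=k\}}$ and $\widehat\kappa^{\rm GCC}_i(\lambda)\to\widehat\kappa_i$. The ratio inside the second sum of Theorem~\ref{theorem error propagation} therefore tends to $\mathds{1}_{\{i\ge I-t\}}$. The numerator becomes
\[
\sum_{i\ge I-t}\bigl(\beta_{I-i}+1-\beta_{I-i}\bigr)\widehat\kappa_i\pi_i=\sum_{i=I-t}^I \widehat{C}^{\rm CL}_{i,J},
\]
using $\widehat\kappa_i\pi_i=C_{i,I-i}/\beta_{I-i}=\widehat C^{\rm CL}_{i,J}$. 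This is \eqref{q0}.

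To identify the result with Mack's estimate \eqref{parameter Mack}, I multiply $\bigl(\sum_i\widehat C^{\rm CL}_{i,J}\bigr)^2 \widehat q_j^2(0)=\bigl(\sum_{i\ge I-j}\widehat{\widehat{C}^{\rm CL}_{i,J}}\bigr)^2$ and expand the square into diagonal terms plus twice the cross terms with $i<k$. For $i<k$, the condition $i\ge I-j$ already implies $k\ge I-j$, so each cross term carries the single indicator $\mathds{1}_{\{j\ge I-i\}}$. Exchanging the sums over $j$ and over $i$ (respectively over pairs $i<k$) turns $\sum_{j=0}^{J-1}\mathds{1}_{\{j\ge I-i\}}$ into $\sum_{j=I-i}^{J-1}$. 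The range $i=1$ contributes nothing, since $I-1=J$, which gives exactly the index range $i\ge I-J+1$ in \eqref{parameter Mack}. This bookkeeping of indicators and summation ranges is the only step requiring care; I expect no real obstacle there.

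For $\lambda\uparrow1$, the weights $\alpha_{i,l}(1)$ no longer depend on $i$, so $\widehat\kappa^{\rm GCC}_i(1)=\widehat\kappa^{\rm CC}$ for all $i$. Write $S=\sum_{k=1}^I\beta_{I-k}\pi_k$ and $S_t=\sum_{k=I-t}^I\beta_{I-k}\pi_k$. The ratio in the theorem is then $S_t/S$ for every $i$, and the first sum in the numerator equals $\widehat\kappa^{\rm CC}S_t$. Consequently the numerator equals
\[
\widehat\kappa^{\rm CC}S_t\Bigl(1+\frac{\sum_i\pi_i-S}{S}\Bigr)=\widehat\kappa^{\rm CC}\,\frac{S_t}{S}\sum_{i=1}^I\pi_i .
\]
Dividing by $\sum_i\widehat C^{\rm CC}_{i,J}$ yields \eqref{q1}. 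Plugging $\widehat q_j(1)$ into \eqref{param error estimate} gives $\widehat\Delta^2_{\rm CC}$.
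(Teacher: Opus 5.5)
Your proposal is correct and follows the route the paper implicitly takes: evaluate the closed form of $q_t(\lambda)$ from Theorem~\ref{theorem error propagation} at $\lambda=0$ and $\lambda=1$ (justified by continuity), substitute into \eqref{param error estimate}, and expand $\bigl(\sum_{i\ge I-j}\widehat{\widehat{C}^{\rm CL}_{i,J}}\bigr)^2$ to recover \eqref{parameter Mack}. The paper leaves these steps unwritten, and your algebra for \eqref{q0} and \eqref{q1}, as well as your index bookkeeping in the Mack identification, all check out.
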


The final step is to estimate the process uncertainty. In analogy to \eqref{process Mack}, we set
\begin{equation}\label{process GCC}
\widehat{\Psi}_{\rm GCC}^2(\lambda) = \sum_{i=I-J+1}^I
\left(\widehat{C}_{i,J}^{\rm GCC}(\lambda)\right)^2\sum_{j=I-i}^{J-1} 
\frac{\widehat{\sigma}_j^2/\widehat{f}_j^2}{\widehat{C}_{i,j}^{\rm GCC}(\lambda)}.
\end{equation}
Collecting the terms \eqref{param error estimate} and \eqref{process GCC} provides us with the following estimate for the rooted MSEP (RMSEP) in the GCC method.

\begin{estimator}
The estimate of the RMSEP  in the GCC method for  $\lambda \in [0,1]$ is given by
\begin{equation}\label{GCC RMSEP}
\widehat{\operatorname{msep}}^{1/2}_{\sum_ i C_{i,J}|{\cal D}_I}\left(\sum_i 
  \widehat{C}_{i,J}^{\rm GCC}(\lambda)\right)
= \sqrt{ \widehat{\Psi}_{\rm GCC}^2(\lambda)+\widehat{\Delta}_{\rm GCC}^2(\lambda)},
\end{equation}
with the two terms on the right-hand side defined in
\eqref{process GCC} and \eqref{param error estimate}. 
\end{estimator}
The limiting case $\lambda=0$ in \eqref{GCC RMSEP} gives Mack's \cite{Mack} RMSEP formula in the CL method. However, the limiting case $\lambda=1$ in \eqref{GCC RMSEP} differs from Saluz' \cite{Saluz} RMSEP formula because the estimation of her claims development pattern was not based on the CL method.

\section{Example}
\label{sec: Example}
We present a numerical example and benchmark it with Mack's \cite{Mack} CL prediction and with Saluz' \cite{Saluz} CC method. We use the claims data from W\"uthrich--Merz \cite[Tables 2.2 and 4.3]{WM2008}, the first table provides the observed cumulative payments in the upper triangle $${\cal D}_I= \{C_{i,j};\, i+j \le I, \, 1\le i \le I, \, 0\le j \le J\},$$ with $I=J+1=10$, and the second table gives the premiums $(\pi_i)_{i=1}^I$.

Aggregating over all accident years, we obtain the GCC total ultimate claim predictor at time $I$
\begin{equation*}
\widehat{C}_{\bullet,J}^{\rm GCC}(\lambda)=
\sum_{i=1}^I
\widehat{C}_{i,J}^{\rm GCC}(\lambda) \qquad \text{ for fixed $\lambda \in [0,1]$.}
\end{equation*}
Our focus will be on the outstanding loss liabilities (OLL) at time $I$ given by
\begin{equation*}
R_\bullet= \sum_{i=1}^I C_{i,J}-C_{i,I-i} .
\end{equation*}
These OLL are predicted at time $I$  by the GCC reserves, for fixed $\lambda \in [0,1]$, 
\begin{equation}\label{GCC reserves}
\widehat{R}_\bullet^{\rm GCC}(\lambda)=\widehat{C}_{\bullet,J}^{\rm GCC}(\lambda)-
\sum_{i=1}^I C_{i,I-i}=\sum_{i=1}^I \widehat{C}_{i,J}^{\rm GCC}(\lambda)-C_{i,I-i}.
\end{equation}
We compute these GCC reserves using the estimated CL pattern 
\eqref{Mack's CL factor estimates}-\eqref{pattern unknown CL}. The results are
presented in Table \ref{reserves table} and Figure \ref{fig: GCC reserves}.

\begin{figure}[htb!]
\begin{center}
\begin{minipage}[t]{0.49\textwidth}
\begin{center}
\includegraphics[width=\textwidth]{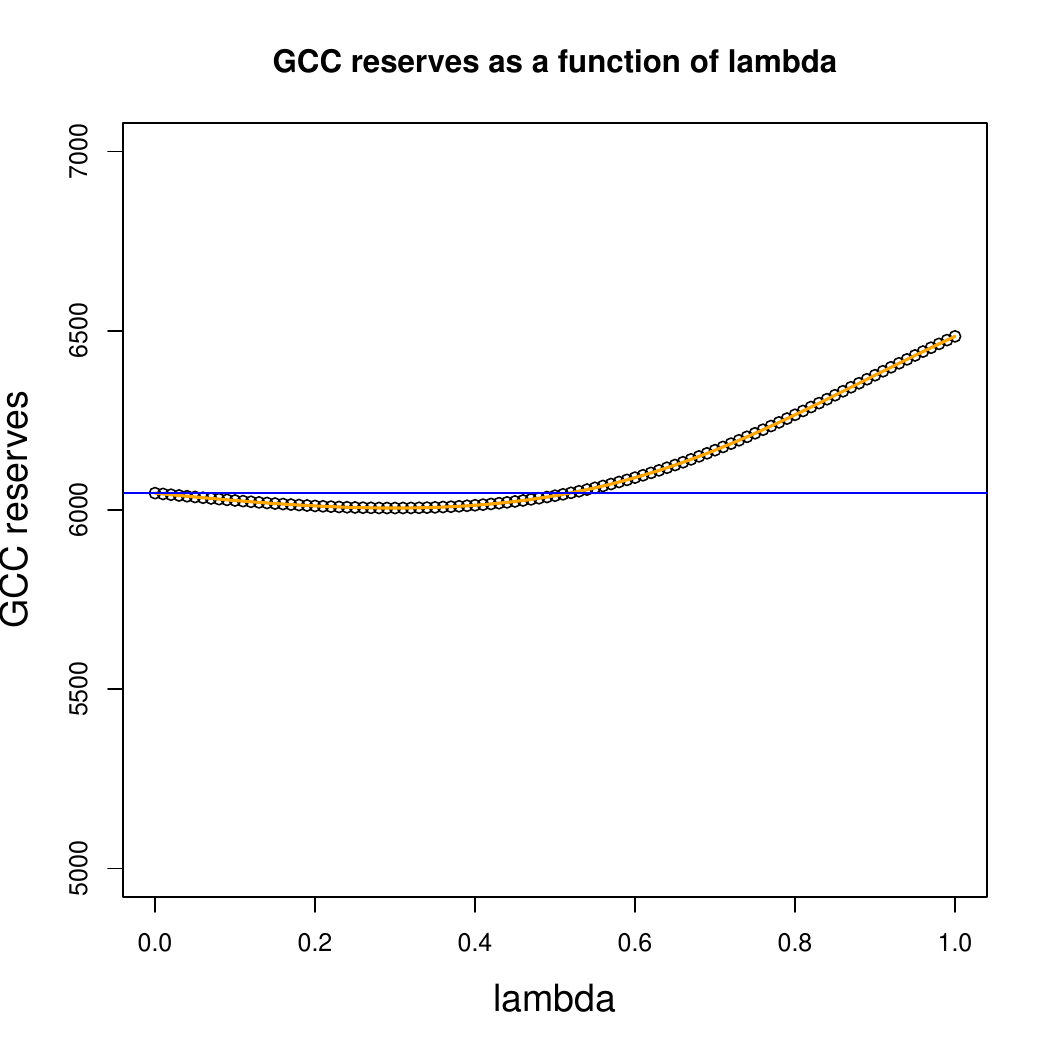}
\end{center}
\end{minipage}
\begin{minipage}[t]{0.49\textwidth}
\begin{center}
\includegraphics[width=\textwidth]{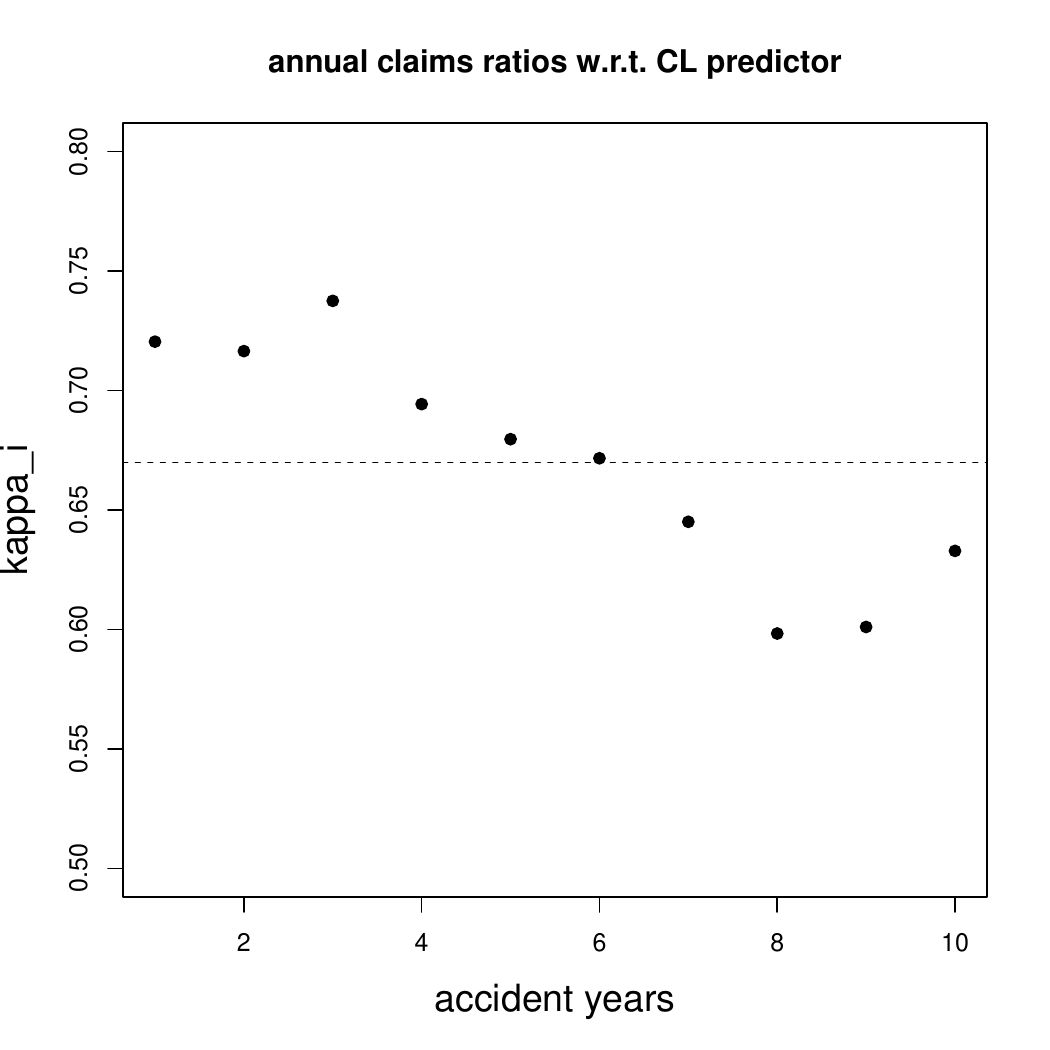}
\end{center}
\end{minipage}
\end{center}
\vspace{-.3cm}
\caption{(lhs) GCC reserves $\widehat{R}_{\bullet}^{\rm GCC}(\lambda)$ as a function of $\lambda \in [0,1]$; $\lambda=0$ gives the CL reserves and $\lambda=1$ the CC reserves; (rhs) individual claims rations $\widehat{\kappa}_i=\widehat{C}^{\rm CL}_{i,J}/\pi_i$ for all accident years $1\le i \le I$. }
\label{fig: GCC reserves}
\end{figure}

\begin{table}[h]
\centering
{\footnotesize
\begin{center}
\begin{tabular}{|l|r|rrr|r|}
  \hline
  method & reserves $ \widehat{R}_\bullet$ & process $\widehat{\Psi}$ & param.~$\widehat{\Delta}$ & RMSEP & CoVa \\ \hline
Mack's \cite{Mack} CL ($\lambda=0$) & 6,047 & 424 & 185 & 463 & 7.66\% \\
GCC with $\lambda=0.25$ & 6,007 & 422 & 174 & 457 & 7.60\% \\
GCC with $\lambda=0.50$ & 6,040 & 422 & 166 & 454 & 7.51\% \\
GCC with $\lambda=0.55$ & 6,062 & 423 & 164 & 454 & 7.48\% \\
GCC with $\lambda=0.75$ & 6,214 & 427 & 159 & 455 & 7.33\% \\
  CC ($\lambda=1$) & 6,485 & 434 & 156 & 461 & 7.11\% \\
  \hline
CC Saluz \cite[Table 2]{Saluz} & 6,618 & 436 & 202 & 481 & 7.26\% \\  
  \hline
\end{tabular}
\end{center}}
\caption{GCC method for selected $\lambda \in [0,1]$, $\lambda=0$ gives the CL and $\lambda=1$ gives the CC methods; the reserves and error estimates are displayed in $10^2$ units.}
\label{reserves table}
\end{table}

The GCC reserves with $\lambda=0$ coincides with Mack's \cite{Mack} CL reserves, see also W\"uthrich--Merz \cite[Table 3.6]{WM2008}, the GCC reserves with $\lambda=1$ coincide with B\"uhlmann's \cite{Bu_unpublished} CC reserves, see also W\"uthrich--Merz \cite[Table 4.3]{WM2008}, but they differ from Saluz' \cite[Table 2]{Saluz} CC reserves, because the latter uses a different estimate for the claims development pattern $(\beta_j)_{j=0}^J$, see Saluz \cite[Formula (2.3)]{Saluz}.

We interpret the results by first focusing on Figure \ref{fig: GCC reserves}.
The right-hand side of Figure \ref{fig: GCC reserves} shows the individual estimated claims ratios
$$
\widehat{\kappa}_i=\frac{C_{i,I-i}}{\widehat{\beta}_{I-i}\,\pi_i}=\frac{\widehat{C}^{\rm CL}_{i,J}}{\pi_i} \qquad \text{ for $i=1,\ldots, I$.}
$$
From this figure we conclude that these individual estimated claims ratios 
$\widehat{\kappa}_i$ are decreasing in $i=1,\ldots, I=10$. Thus, this data has a time trend in calendar year $i$ which is not properly captured in the premiums $\pi_i$ to comply with Model Assumptions \ref{Model-Assumptions-Cape-Cod}. Plugging this into the CC method \eqref{CC kappa} of B\"uhlmann \cite{Bu_unpublished} we likely over-estimate the OLL $R_\bullet$ by the CC reserves $\widehat{R}_\bullet^{\rm CC}:=\widehat{R}_\bullet^{\rm GCC}(1)$. This is also suggested by the left-hand side of Figure \ref{fig: GCC reserves} which plots the GCC reserves $\lambda \mapsto \widehat{R}_\bullet^{\rm GCC}(\lambda)$ as a function of $\lambda \in [0,1]$. We have a comparably stable picture for $\lambda \in [0,0.6]$, which suggest to choosing $\lambda$ within this interval to capture the trend in $\widehat{\kappa}_i$, i.e., this is precisely the additional trending feature offered by Gluck's \cite{Gluck} GCC method. In the range of $\lambda \in [0,0.6]$, the GCC reserves show little sensitivity in the specific $\lambda$-choice, i.e., we have robustness in this hyper-parameter choice which is typically is a good property in practical applications.

\begin{figure}[htb!]
\begin{center}
\begin{minipage}[t]{0.49\textwidth}
\begin{center}
\includegraphics[width=\textwidth]{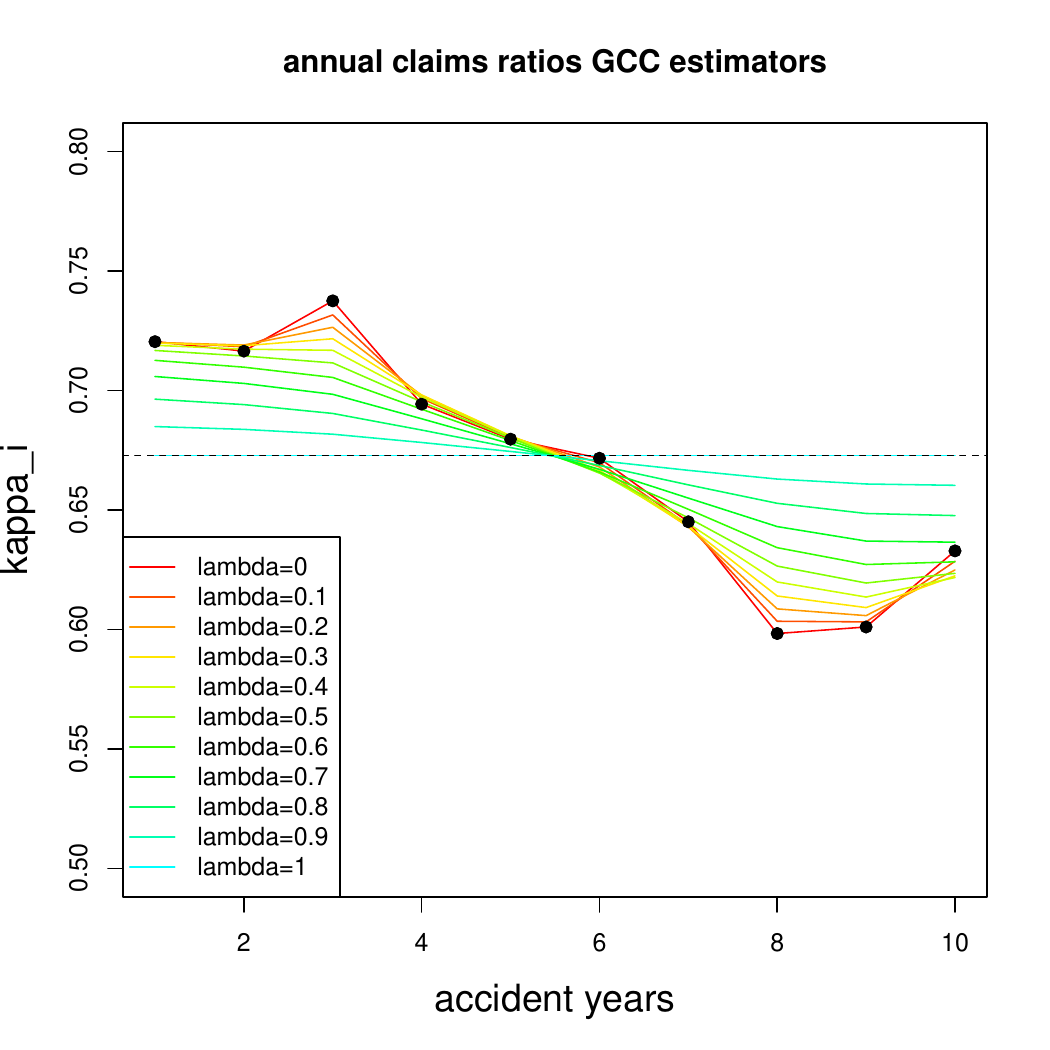}
\end{center}
\end{minipage}
\end{center}
\vspace{-.3cm}
\caption{Estimated GCC claims rations $\widehat{\kappa}^{\rm GCC}_i(\lambda)$ for all accident years $1\le i \le I$ and $\lambda \in \{0,0.1,0.2, \ldots, 1\}$. }
\label{fig: GCC reserves 2}
\end{figure}

Figure \ref{fig: GCC reserves 2} shows the resulting GCC claims ratios 
$\widehat{\kappa}^{\rm GCC}_i(\lambda)$ for $\lambda \in \{0,0.1,0.2, \ldots, 1\}$.
The case $\lambda=0$ gives the CL reserves which corresponds to the black dots in
Figure \ref{fig: GCC reserves 2}, this is fully observation-driven. The case $\lambda=1$ gives the CC reserves corresponding to the horizontal dashed line. The cases $\lambda \in (0,1)$ provide a middle ground being a weighted average of these two extreme cases.

From Table \ref{reserves table}, we also observe that the CC prediction of Saluz \cite{Saluz} probably even more over-estimates than the CC prediction of B\"uhlmann \cite{Bu_unpublished}. This comes from the fact that Saluz \cite[Section 2.1.1]{Saluz} also uses the premiums $(\pi_i)_{i=1}^I$ for the estimation of the claims development pattern $(\beta_j)_{j=0}^J$. Generally, we do not recommend this, since in case of trends, the CC method of Saluz \cite{Saluz} seems to apply a double-penalization of this trend misspecification as can be seen from the results of Table \ref{reserves table}. Using the method of Saluz' \cite{Saluz}, one could develop a method for removing this trend from the premiums, e.g., by constructing an index from the GCC annual claims ratios, but we do not consider this further here.

\begin{figure}[htb!]
\begin{center}
\begin{minipage}[t]{0.49\textwidth}
\begin{center}
\includegraphics[width=\textwidth]{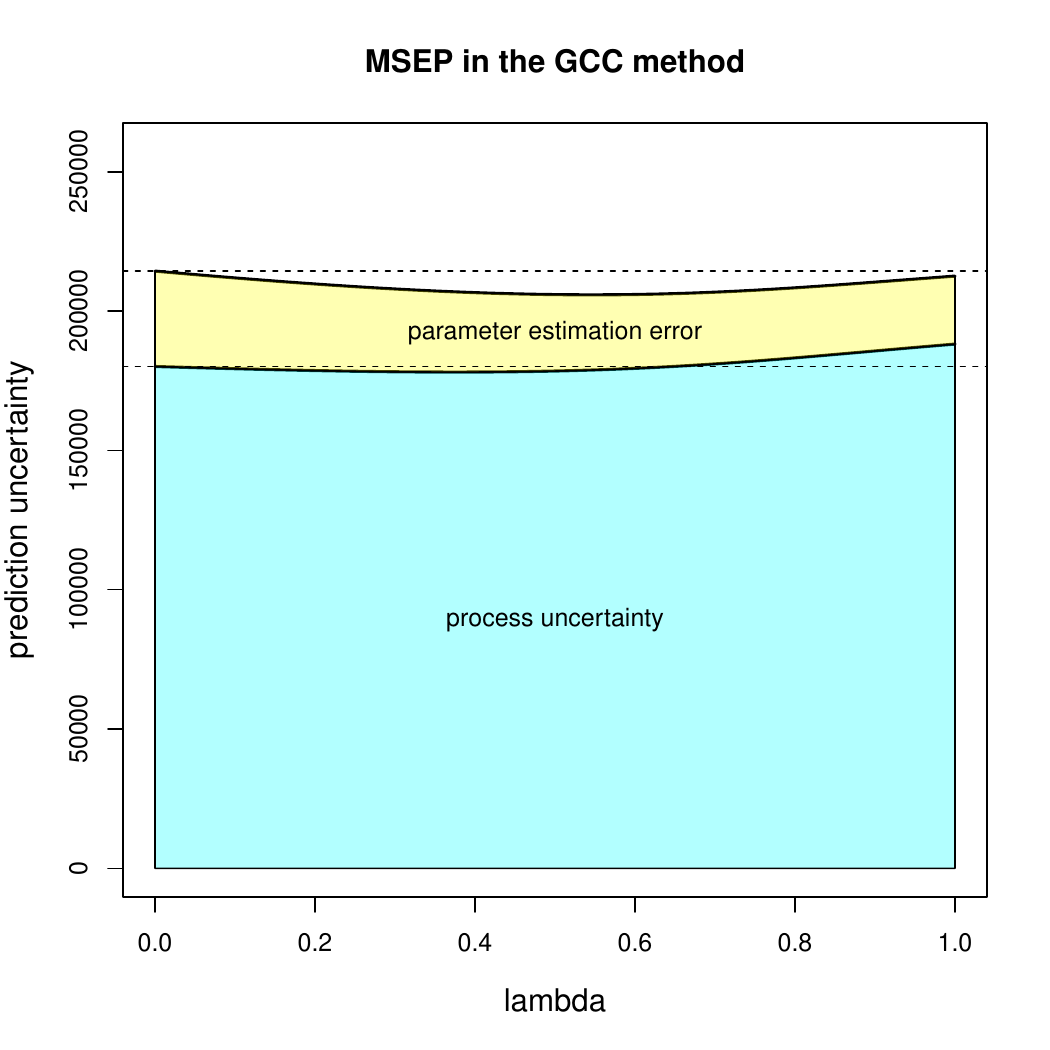}
\end{center}
\end{minipage}
\begin{minipage}[t]{0.49\textwidth}
\begin{center}
\includegraphics[width=\textwidth]{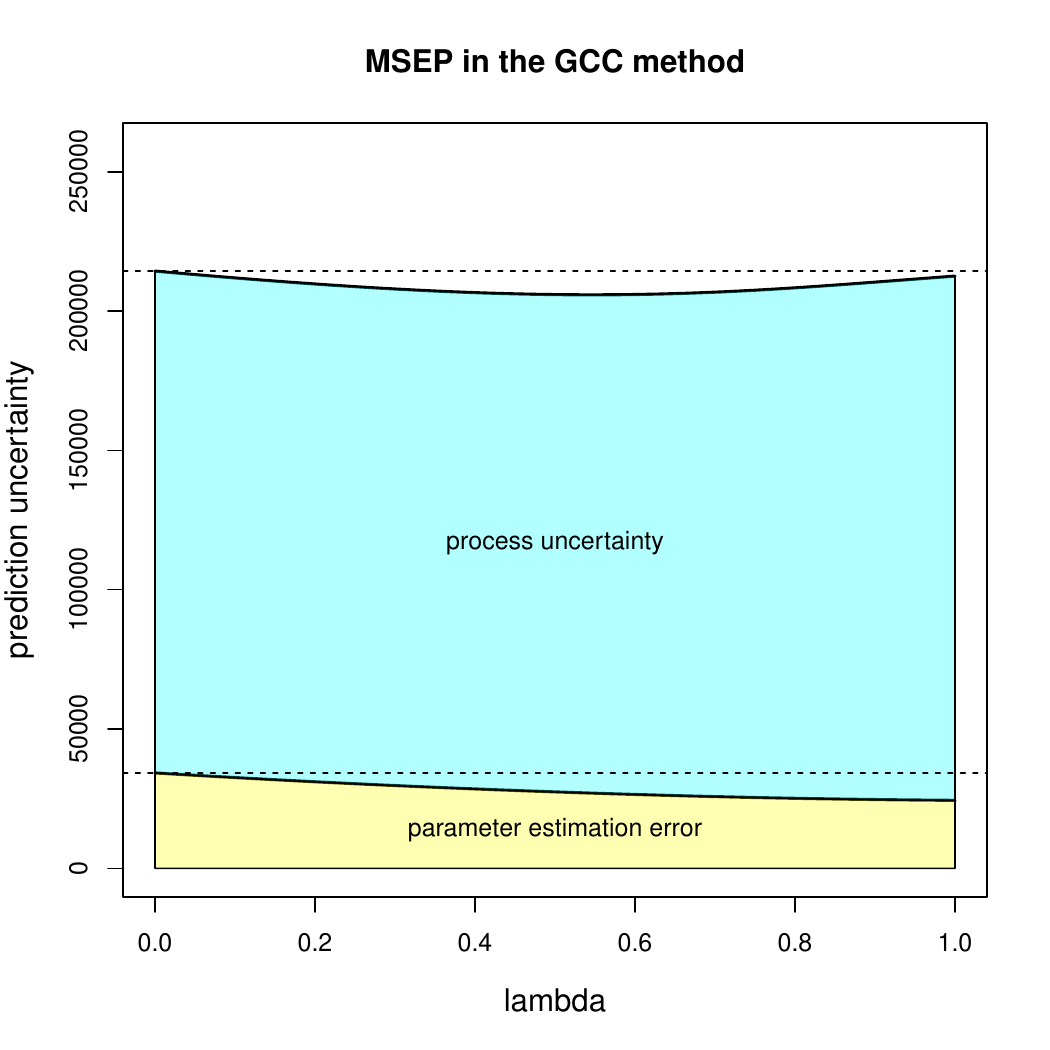}
\end{center}
\end{minipage}
\end{center}
\vspace{-.3cm}
\caption{MSEP $\lambda \mapsto \widehat{\operatorname{msep}}_{C_{\bullet,J}|{\cal D}_I}( 
  \widehat{C}_{\bullet,J}^{\rm GCC}(\lambda))$ as a function of $\lambda \in [0,1]$ and split w.r.t.~process uncertainty $\widehat{\Psi}_{\rm GCC}^2(\lambda)$ and parameter estimation error $\widehat{\Delta}_{\rm GCC}^2(\lambda)$.}
\label{fig: MSEP 1}
\end{figure}

Next, we focus on the MSEP results \eqref{GCC RMSEP} considered as a function of $\lambda \in [0,1]$, with $\lambda=0$ giving us Mack's \cite{Mack} CL MSEP. Figure \ref{fig: MSEP 1} shows these MSEPs (unrooted versions) being split w.r.t.~the process uncertainty $\widehat{\Psi}_{\rm GCC}^2(\lambda)$ and the parameter estimation error $\widehat{\Delta}_{\rm GCC}^2(\lambda)$. The two plots only differ in the order of the arrangement of  
$\widehat{\Psi}_{\rm GCC}^2(\lambda)$ and $\widehat{\Delta}_{\rm GCC}^2(\lambda)$. In this example, we observe that the process uncertainty dominates the parameter estimation error -- there are also many examples where this is not the case. The parameter estimation error $\widehat{\Delta}^2_{\rm GCC}(\lambda)$ is decreasing in $\lambda$, while the process uncertainty $\widehat{\Psi}^2_{\rm GCC}(\lambda)$ is slightly increasing in that parameter. The decreasing estimation error is explained by the fact that with increasing $\lambda$ we increasingly 'trust' more into a correct premium level $\pi_i$ which (in a credibility philosophy) should lower the model error, and hence the parameter estimation error. The increasing process uncertainty is mainly attributed to the fact that the underlying reserves $\widehat{R}_\bullet^{\rm GCC}(\lambda)$ are increasing in $\lambda$. In this example, the minimal MSEP is obtained in $\lambda = 0.55$. Remark that this is simply an observation for which $\lambda$ the minimal MSEP is attained. However, there is no theory that verifies that the model with the minimal MSEP is the optimal one. In fact, a small MSEP does not express model accuracy but often it is just an artefact of the model assumptions and the estimation procedure chosen. Proper model selection can be done, e.g., be the re-reserving approach presented in Balona--Richman \cite{BalonaRichman}. This re-reserving approach presents a proper out-of-sample and out-of-time verification.

\begin{figure}[htb!]
\begin{center}
\begin{minipage}[t]{0.49\textwidth}
\begin{center}
\includegraphics[width=\textwidth]{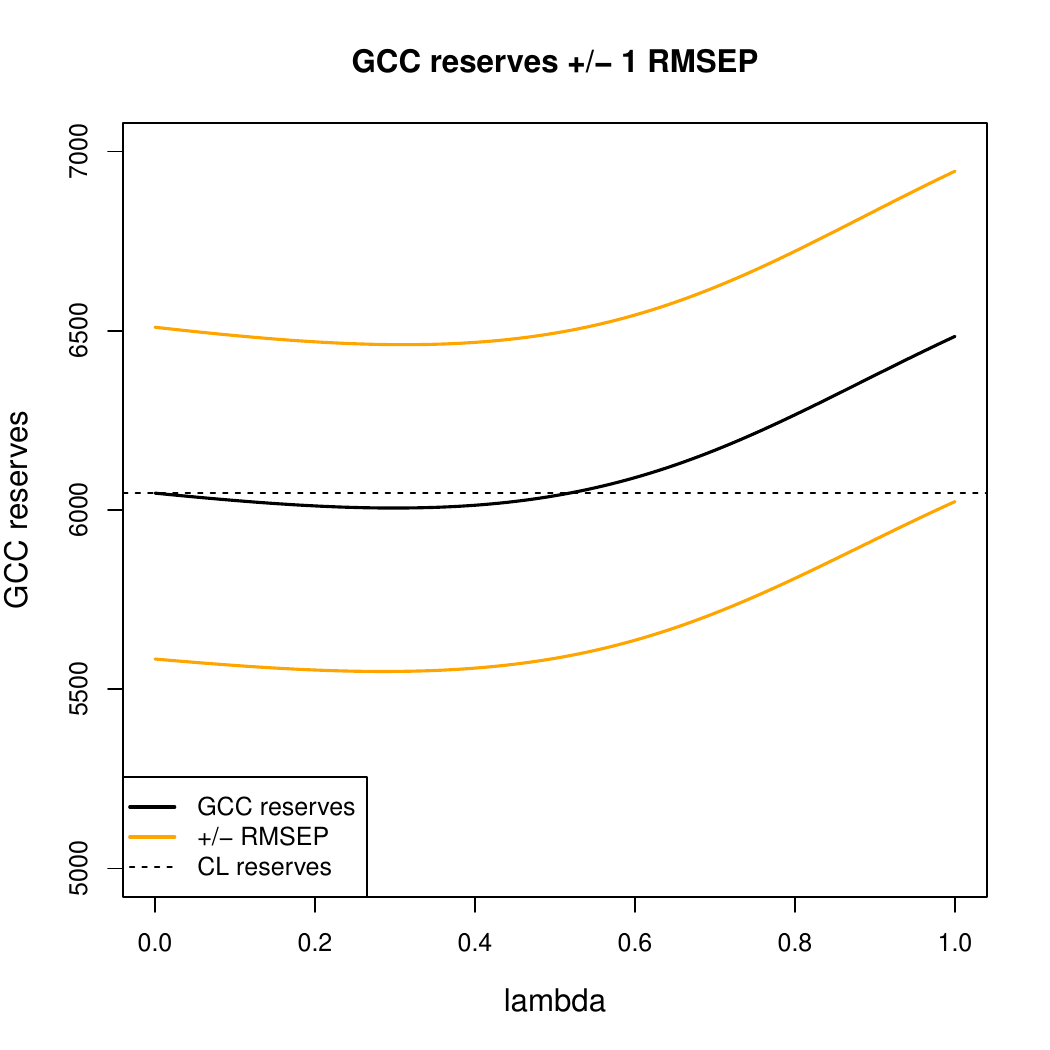}
\end{center}
\end{minipage}
\begin{minipage}[t]{0.49\textwidth}
\begin{center}
\includegraphics[width=\textwidth]{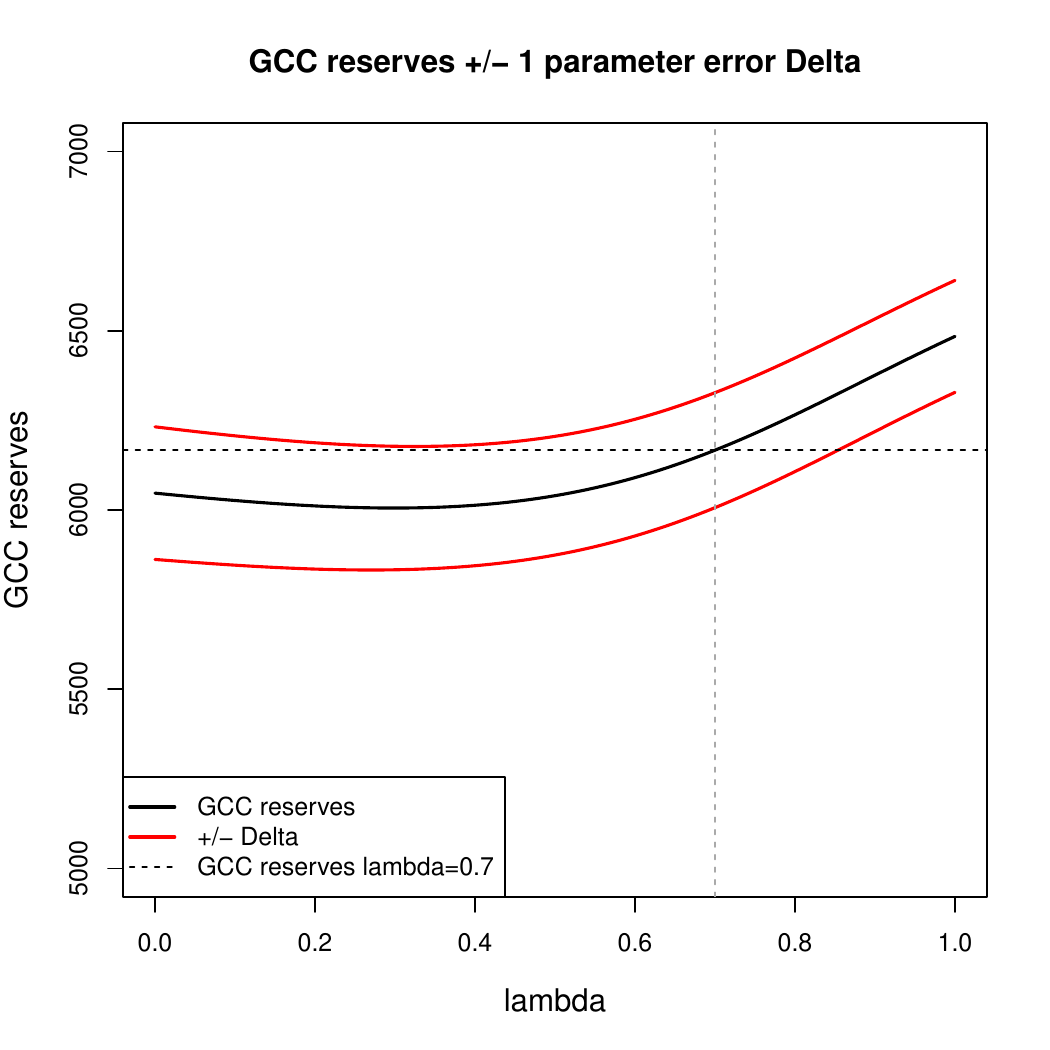}
\end{center}
\end{minipage}
\end{center}
\vspace{-.7cm}
\caption{(lhs) GCC reserves $\lambda \mapsto \widehat{R}_\bullet^{\rm GCC}(\lambda)$ with confidence bounds of one RMSEP $\widehat{\operatorname{msep}}^{1/2}_{C_{\bullet,J}|{\cal D}_I}( 
  \widehat{C}_{\bullet,J}^{\rm GCC}(\lambda))$; the (rhs) gives the same plot as on the (lhs) but the confidence bounds correspond to the estimation error $\widehat{\Delta}_{\rm GCC}(\lambda)$; the $y$-scale is identical in both plots.}
\label{fig: MSEP 2}
\end{figure}

Next, we present the GCC reserves $\widehat{R}_\bullet^{\rm GCC}(\lambda)$ together with their RMSEPs $\widehat{\operatorname{msep}}^{1/2}_{C_{\bullet,J}|{\cal D}_I}(\widehat{C}_{\bullet,J}^{\rm GCC}(\lambda))$ as a function of $\lambda \in [0,1]$. The results are given in Table \ref{reserves table} and Figure \ref{fig: MSEP 2} (left-hand side). We obtain a comparably stable picture for $\lambda \in [0,0.6]$, and for bigger $\lambda$ the trend in premiums leads to increasing claims reserves. In fact, the difference between the CL reserves and the CC reserves is roughly one RMSEP as can be seen from Figure \ref{fig: MSEP 2} (left-hand side). 
To further sharpen the picture, we give the same graph on the right-hand side of Figure \ref{fig: MSEP 2}, but we replace the RMSEP confidence bounds by confidence bounds corresponding to the parameter estimation error $\widehat{\Delta}_{\rm GCC}(\lambda)$; we interpret this quantity as the impact of model uncertainty coming from model fitting (for a given  $\lambda$). The dashed lines show the GCC reserves $\widehat{R}_\bullet^{\rm GCC}(\lambda)$ for $\lambda=0.7$. This is the maximal claims reserves that is within the error bounds for all smaller $\lambda <0.7$ which could still be explained by parameter estimation error. This indicates that the GCC reserves for bigger $\lambda>0.7$ likely over-estimate the OLL, they are rather sensitive to the choice of $\lambda$ in this range.

\begin{figure}[htb!]
\begin{center}
\begin{minipage}[t]{0.49\textwidth}
\begin{center}
\includegraphics[width=\textwidth]{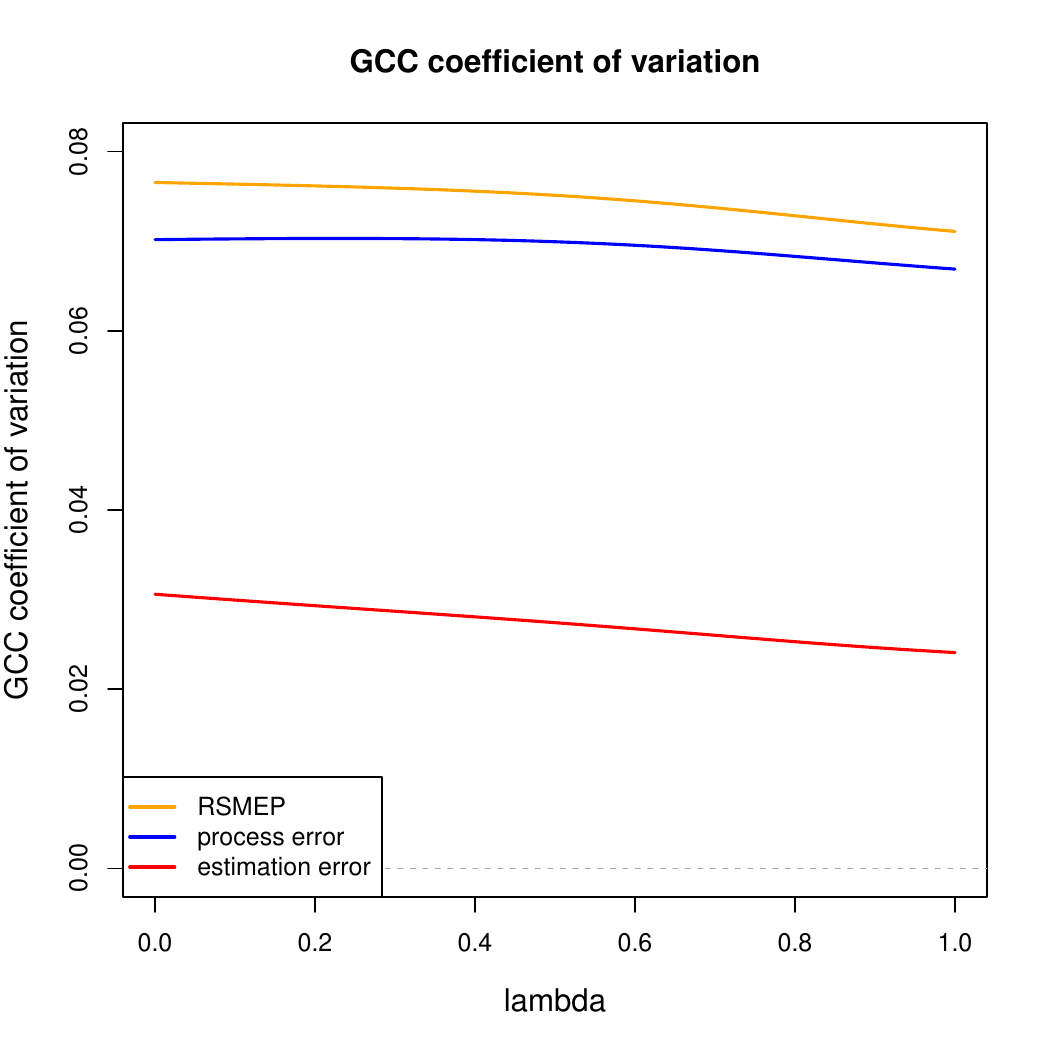}
\end{center}
\end{minipage}
\end{center}
\vspace{-.7cm}
\caption{Coefficient of variation $\operatorname{CoVa}(\lambda)$, additionally we indicate the relative process uncertainty $\widehat{\Psi}_{\rm GCC}(\lambda)/\widehat{R}_\bullet^{\rm GCC}(\lambda)$ and the relative estimation error
  $\widehat{\Delta}_{\rm GCC}(\lambda)/\widehat{R}_\bullet^{\rm GCC}(\lambda)$.}
\label{fig: MSEP 3}
\end{figure}

Finally, Figure \ref{fig: MSEP 3} shows the coefficient of variation (CoVa) defined by
\begin{equation}\label{def: CoVa}
  \operatorname{CoVa}(\lambda) = \frac{\widehat{\operatorname{msep}}^{1/2}_{ C_{\bullet,J}|{\cal D}_I}\left(   \widehat{C}_{\bullet,J}^{\rm GCC}(\lambda)\right)}{\widehat{R}_\bullet^{\rm GCC}(\lambda)} \qquad \text{ for $\lambda \in [0,1]$.}
\end{equation}
Moreover, we also provide this ratio for the process uncertainty and the parameter estimation error only, respectively. From these graphs we conclude that the CoVa is roughly 7.5\%, and likely the CC method under-estimates the relative uncertainty, mainly because the resulting CC reserves (denominator in \eqref{def: CoVa}) are too large.

\section{Conclusions}
\label{sec: Conclusions}

This paper discussed B\"uhlmann's \cite{Bu_unpublished} Cape Cod method and Gluck's \cite{Gluck} generalized Cape Cod extension that allows for trending. We illustrated how these two claims reserving methods are related to Mack's \cite{Mack} chain-ladder claims reserving model.  On the technical side, we provided a new analytical formula for the prediction uncertainty in the generalized Cape Cod method. In fact, this is the literature gap that has now been closed by the present paper. The derivation of this prediction uncertainty formula is based on R\"ohr's \cite{Rohr} error propagation approach. In the limiting case, the generalized Cape Cod approach gives the chain-ladder reserves, and we verified that our prediction uncertainty formula converges to Mack's \cite{Mack} uncertainty formula in the chain-ladder model. On the other hand, our prediction uncertainty formula in the Cape Cod method differs from the one derived in Saluz \cite{Saluz} because this latter reference uses a different parameter estimation method compared to B\"uhlmann's \cite{Bu_unpublished} Cape Cod proposal, which naturally also leads to different prediction uncertainties.

The important practical implication of the new mean squared error of prediction (MSEP) estimator presented here is methodological consistency between booked reserves and the reported uncertainty. Industry reserving processes often book reserves via the Bornhuetter--Ferguson \cite{BF} or the generalized Cape Cod \cite{Gluck} reserving methods or some intermediate position, but report prediction uncertainty using either Mack's \cite{Mack} chain-ladder formula or the Mack bootstrap of England--Verrall \cite{EnglandVerrall2006}. Both of these approaches reproduce the chain-ladder reserves in the mean and, therefore, quantify the uncertainty of a predictor that is typically not the one being booked. The framework presented here provides a closed-form MSEP estimator that uses the same chain-ladder-implied parameter estimates as the generalized Cape Cod predictor, with the generalized Cape Cod method \cite{Gluck} being a good candidate for the booked reserves themselves or serving as a good algorithmic approximation to Bornhuetter--Ferguson \cite{BF} reserving.

\bigskip

{\small 
\renewcommand{\baselinestretch}{.51}
}

\newpage
\appendix

\section{Proofs}

{\Beweis
  {\bf Proof of Theorem \ref{theorem error propagation}.}
The GCC predictor over all accident years is given by
\begin{eqnarray}
  \sum_{i=1}^I  \widehat{C}_{i,J}^{\rm GCC}\left((f_j)_{j=0}^{J-1}\right)
\label{compute derivative 1}
  &=&\sum_{i=1}^I
  C_{i,I-i} + \left(1-\prod_{j=I-i}^{J-1}f^{-1}_{j}\right)\left[ \sum_{l=1}^I \frac{C_{l,I-l} \, \lambda^{|i-l|}}{\sum_{k=1}^I \prod_{j=I-k}^{J-1}f^{-1}_{j}
\, \pi_k\, \lambda^{|i-k|}} \right]\pi_i.
\end{eqnarray}
The right-hand side expresses the GCC predictor as a function of the CL factors $(f_j)_{j=0}^{J-1}$.
We consider the derivative w.r.t.~$\log f_t$, $t\in \{0,\ldots, J-1\}$, 
\begin{equation*}
\partial_{\log f_t}\left(\prod_{j=I-i}^{J-1}f^{-1}_{j}\right)
  = \partial_{\log f_t}\left(\prod_{j=I-i}^{J-1}e^{-\log f_{j}}\right)     
  =- \mathds{1}_{\{t\ge I-i\}}\left(\prod_{j=I-i}^{J-1}f^{-1}_{j}\right).
\end{equation*}
Applying this to \eqref{compute derivative 1}, we obtain for $i \in \{1,\ldots, I\}$
and $t\in \{0,\ldots, J-1\}$
\begin{eqnarray*}
&& \hspace{-1cm}
 \partial_{\log f_t} \left(1-\prod_{j=I-i}^{J-1}f^{-1}_{j}\right)\left[ \sum_{l=1}^I \frac{C_{l,I-l} \, \lambda^{|i-l|}}{\sum_{k=1}^I \prod_{j=I-k}^{J-1}f^{-1}_{j}
    \, \pi_k\, \lambda^{|i-k|}} \right]\pi_i
\\      
&=&  \mathds{1}_{\{t\ge I-i\}} \prod_{j=I-i}^{J-1}f^{-1}_{j}\left[ \sum_{l=1}^I \frac{C_{l,I-l} \, \lambda^{|i-l|}}{\sum_{k=1}^I \prod_{j=I-k}^{J-1}f^{-1}_{j}
    \, \pi_k\, \lambda^{|i-k|}} \right]\pi_i
\\&&+    \left(1-\prod_{j=I-i}^{J-1}f^{-1}_{j}\right)\partial_{\log f_t}\left[  \frac{\sum_{l=1}^I C_{l,I-l} \, \lambda^{|i-l|}}{\sum_{k=1}^I \prod_{j=I-k}^{J-1}f^{-1}_{j}
     \, \pi_k\, \lambda^{|i-k|}} \right]\pi_i.
\end{eqnarray*}     
There remains the computation of the last derivative. It is given by
\begin{eqnarray*}
&& \hspace{-1cm}
\partial_{\log f_t}\left[  \frac{\sum_{l=1}^I C_{l,I-l} \, \lambda^{|i-l|}}{\sum_{k=1}^I \prod_{j=I-k}^{J-1}f^{-1}_{j}
   \, \pi_k\, \lambda^{|i-k|}} \right]
  \\&=&
-\left[  \frac{\sum_{l=1}^I C_{l,I-l} \, \lambda^{|i-l|}}{\left(\sum_{k=1}^I \prod_{j=I-k}^{J-1}f^{-1}_{j}
        \, \pi_k\, \lambda^{|i-k|}\right)^2} \right]\partial_{\log f_t}
        \left(\sum_{k=1}^I \prod_{j=I-k}^{J-1}f^{-1}_{j}
        \, \pi_k\, \lambda^{|i-k|}\right)
  \\&=&\left[  \frac{\sum_{l=1}^I C_{l,I-l} \, \lambda^{|i-l|}}{\sum_{k=1}^I \prod_{j=I-k}^{J-1}f^{-1}_{j}
        \, \pi_k\, \lambda^{|i-k|}} \right]\frac{
        \sum_{k=1}^I \mathds{1}_{\{t\ge I-k\}} \prod_{j=I-k}^{J-1}f^{-1}_{j}
   \, \pi_k\, \lambda^{|i-k|}}{\sum_{k=1}^I \prod_{j=I-k}^{J-1}f^{-1}_{j}
        \, \pi_k\, \lambda^{|i-k|}}.        
\end{eqnarray*}
Collecting all the term provides us with
\begin{eqnarray*}
&& \hspace{-1cm}
 \partial_{\log f_t} \left(1-\beta_{I-i}\right)\left[ \sum_{l=1}^I \frac{C_{l,I-l} \, \lambda^{|i-l|}}{\sum_{k=1}^I \beta_{I-k} \, \pi_k\, \lambda^{|i-k|}} \right]\pi_i
\\      
&=&  \mathds{1}_{\{t\ge I-i\}}\, \beta_{I-i}\,
    \widehat{\kappa}_i^{\rm GCC}(\lambda)\, \pi_i
    + \left(1-\beta_{I-i}\right)
    \widehat{\kappa}_i^{\rm GCC}(\lambda)\, \pi_i\,
    \frac{
    \sum_{k=1}^I \mathds{1}_{\{t\ge I-k\}} \beta_{I-k}
    \, \pi_k\, \lambda^{|i-k|}}{\sum_{k=1}^I \beta_{I-k}\, \pi_k\, \lambda^{|i-k|}}.
\end{eqnarray*}     
This completes the proof.
\EndProof}

\end{document}